\DeclareMathOperator*{\argminA}{arg\,min}
\newtheorem{corollary}{Corollary}
\title{
On the Selection Stability of Stability Selection and Its Applications}
\author[1]{Mahdi Nouraie\orcidlink{0000-0002-4792-4994}}
\author[1,2]{Samuel Muller\orcidlink{0000-0002-3087-8127}\thanks{Address for correspondence: samuel.muller@mq.edu.au}}
\affil[1]{School of Mathematical and Physical Sciences, Macquarie University}
\affil[2]{School of Mathematics and Statistics, The University of Sydney}
\date{}
\begin{document}
\begin{spacing}{1}
\maketitle
\begin{abstract}
\noindent{Stability selection is a widely adopted resampling-based framework for high-dimensional variable selection. This paper seeks to broaden the use of an established stability estimator to evaluate the overall stability of the stability selection results, moving beyond single-variable analysis. We suggest that the stability estimator offers two advantages: it can serve as a reference to reflect the robustness of the results obtained, and it can help identify a Pareto optimal regularization value to improve stability. By determining the regularization value, we calibrate key stability selection parameters, namely, the decision-making threshold and the expected number of falsely selected variables, within established theoretical bounds. In addition, the convergence of stability values over successive sub-samples sheds light on the required number of sub-samples addressing a notable gap in prior studies. The \texttt{stabplot} R package is developed to facilitate the use of the methodology featured in this paper.}
\end{abstract}
Keywords: Bioinformatics, Feature Selection, Hyperparameter Tuning, Lasso, Regularization Tuning, Variable Selection
\end{spacing}

\section{Introduction}\label{s1}
Stability selection, a widely recognized resampling-based variable selection framework, performs selection by examining how frequently variables are chosen across multiple randomly selected resamples \citep{meinshausen2010stability}. \citet{meinshausen2010stability} analyzed the selection frequencies for individual variables over a range of regularization values, presenting the results in a plot known as the `stability path'. This plot facilitates the identification of variables that are consistently selected over the regularization grid. In this paper, we shift the focus from examining the stability of individual variables to evaluating the overall stability of the stability selection results. Employing an established stability estimator, we assess the stability of the results in its entirety and introduce what we call `stable stability selection', which identifies the variables that are associated with highly stable results.

We consider a dataset $\mathcal{D} = \{(\boldsymbol{x}^\top_{i}, y_{i})\}_{i=1}^{n}$, where each element consists of a univariate response $y_{i} \in \mathbb{R}$ and a $p$-dimensional vector of fixed covariates $\boldsymbol{x}^\top_{i} \in \mathbb{R}^{p}$. In the context of linear regression, it is typically assumed that pairs $(\boldsymbol{x}^\top_{i}, y_{i})$ are independent and identically distributed (i.i.d.); when these are assumed to be random, but for simplicity, we assume that the covariate vector is fixed.

The linear regression model is formally expressed as $\boldsymbol{Y} = \beta_0 + X\boldsymbol{\beta} + \boldsymbol{\varepsilon}$, where $X \in \operatorname{mat}(n \times p)$ represents the design matrix, $\boldsymbol{Y} \in \mathbb{R}^{n}$ is the $n$-dimensional vector corresponding to the univariate response variable, and $\beta_0 \in \mathbb{R}$ denotes the intercept term. The vector of regression coefficients for the $p$ non-constant covariates is denoted by $\boldsymbol{\beta} \in \mathbb{R}^{p}$, while $\boldsymbol{\varepsilon} \in \mathbb{R}^{n}$ represents the $n$-dimensional vector of random errors. It is assumed that the components of $\boldsymbol{\varepsilon}$ are i.i.d. and independent of the covariates.

Variable selection, in accordance with the terminology of \citet{meinshausen2010stability}, typically involves categorizing covariates into two distinct groups: the signal group $S := \{k \neq 0 \mid \beta_k \neq 0\}$ and the noise group $N := \{k \neq 0 \mid \beta_k = 0\}$, where $S \cap N = \emptyset$. The primary objective of variable selection is to accurately identify the signal group $S$.

The stability of a variable selection method refers to the consistency with which it selects variables across different training sets drawn from the same underlying distribution \citep{kalousis2007stability}. The stability selection approach described by \citet{meinshausen2010stability} uses sub-samples that are half the size of the sample size of the original dataset to perform variable selection. Stability selection enables the identification of variables that are consistently selected as relevant across the majority of sub-samples given a regularization value, considering them as stable variables. Furthermore, \citet{meinshausen2010stability} established an asymptotic upper-bound for the Per-Family Error Rate, which represents the expected number of falsely selected variables. Later, \citet{shah2013variable} introduced the complementary pairs stability selection method, which computes selection frequencies by counting how often a variable is included in models fitted to complementary $50\%$ sub-samples. \citet{shah2013variable} introduced stricter upper-bounds for the expected number of variables selected with low selection frequencies.

There is a substantial body of research on the stability of variable selection methods, where numerous stability measures have been proposed. For more comprehensive information, we refer to \citet{kuncheva2007stability}, \citet{nogueira2018stability}, and \citet{sen2021critical}. \citet{nogueira2018stability} conducted an extensive literature review on the topic, consolidating the desirable properties of stability measures into five mathematical conditions. \citet{nogueira2018stability} showed that none of the stability measures previously introduced satisfied all five conditions. In response, \citet{nogueira2018stability} proposed a novel stability measure, which was the first in the literature to meet all these criteria, generalizing several previous works such as in \citet{kuncheva2007stability}.

\citet{nogueira2018stability}, linking their work with \citet{gwet2008variance}, demonstrated that as the number of resamples approaches infinity, their stability measure, $\hat{\Phi}(\cdot)$, converges to a Normal distribution with a location parameter $\mu = \Phi$, where $\Phi$ denotes population stability.  They also compared the stability of the stability selection with that of the Least Absolute Shrinkage and Selection Operator \citep[Lasso;][]{tibshirani1996regression}, concluding that the former exhibits greater stability than the latter in most of the scenarios considered.

In this paper, we apply the stability estimator proposed by \citet{nogueira2018stability} to the stability selection results, and proxify the stability values to find the optimal regularization value that achieves high stability with the least possible loss in prediction accuracy. The optimal regularization value then is employed to calibrate two critical parameters of stability selection: the decision-making threshold and the expected number of falsely selected variables, balancing them with respect to one another. We also use the optimal regularization value to define a new variable selection criterion, called stable stability selection, prioritizing the selection of variables associated with highly stable results, rather than those mostly selected in their best-case scenarios over the regularization grid, which is the primary criterion in the traditional stability selection framework. In addition, the convergence value of the stability estimator over the successive sub-samples serves as a reference to assess the reliability of the results obtained. The point at which convergence occurs provides valuable insight into the required number of sub-samples, an aspect for which we found no dedicated research addressing its determination.

\citet{nogueira2018stability} suggested that stability and prediction accuracy can be interpreted through the lens of a Pareto front \cite{pareto1896cours}. We refer to this methodology as stability-accuracy selection. Building on this perspective, we demonstrate that, given two justifiable assumptions, our regularization value constitutes a Pareto-optimal solution within this context.

Hyper-parameter tuning can be approached from various perspectives \citep{Aftab03052025}.
The use of stability to enhance variable selection has been a subject of interest since at least the work of \citet{NIPS2010_301ad0e3}. Among the existing literature, the study by \citet{10.5555/2567709.2567772} is perhaps the most closely related to the present work and has continued to attract interest, as demonstrated by contributions such as \citet{Yang03042020}, \citet{reavie2021pareto}, \citet{Wen03072023}, and \citet{lee2024determining}. In \citet{10.5555/2567709.2567772}, the regularization parameter of penalized regression models is selected based on a stability-driven criterion. Nonetheless, our methodology departs from that of \citet{10.5555/2567709.2567772} by being fully embedded within the stability selection framework, using its internal selection outcomes to guide regularization tuning. In contrast, their method is designed to function externally, resulting in greater computational overhead due to its decoupled implementation.

The rest of this paper is organized as follows. Section \ref{s2} outlines the proposed methodology. Section \ref{s3} describes the real and synthetic datasets used in the paper. The numerical results of applying the proposed method, along with the demonstration of Pareto optimality of the proposed regularization value, are presented in Section \ref{s4}.

%
%
%
%
\section{Methodology}\label{s2}
In this section, we present a methodology to evaluate the selection stability of the stability selection results and to determine the optimal regularization value accordingly. Although we demonstrate the proposed method within the context of stability selection outlined by \citet{meinshausen2010stability}, it is important to note that the methodology is adaptable to other formulations of stability selection, such as those described by \citet{shah2013variable}. We will introduce our methodology following a brief overview of the foundational approach established by \citet{meinshausen2010stability}.

\subsubsection*{Stability Selection}

The stability selection framework requires the implementation of an appropriate selection method. One commonly used technique for variable selection in the linear regression context is the Lasso estimator,  formally defined as
\begin{equation*}\label{eqn:Lasso}
\hat{\beta}_0(\lambda),\boldsymbol{\hat{\beta}}(\lambda) = \argminA_{\beta_0 \in \mathbb{R}, \boldsymbol{\beta} \in \mathbb{R}^{p}} \left(\|\boldsymbol{Y} - \beta_0 - X\boldsymbol{\beta}\|_{2}^{2} + \lambda \sum_{k = 1}^{p} |\beta_k|\right),
\end{equation*}
where $\lambda \in \mathbb{R}^{+}$ represents the Lasso regularization parameter. The set of non-zero coefficients can be identified as $\hat{S}(\lambda) := \{k \neq 0 \mid \hat{\beta}_{k}(\lambda) \neq 0\}$, derived from solving the Lasso equation through convex optimization.

\citet{meinshausen2010stability} defined the stable set as 
\begin{equation}\label{eqn:S_stable}
    \hat{S}^{\text{stable}} := \{j \mid \max_{\lambda \in \Lambda} (\hat{\Pi}_{j}^{\lambda}) \geq \pi_{\text{thr}}\};\quad j = 1, \ldots p,
\end{equation}
where $\Lambda$ represents the set of regularization values, $\pi_{\text{thr}} \in [0.6,0.9]$ is the threshold for decision-making in variable selection, and $\hat{\Pi}_{j}^{\lambda}$ denotes the selection frequency of the $j$th variable given the regularization parameter $\lambda$. Equation~\eqref{eqn:S_stable} identifies variables whose selection frequencies exceed $\pi_{\text{thr}}$ under the regularization value that maximizes their selection frequencies; therefore, it considers best-case of each variable for decision-making. This approach has been criticized in subsequent studies, notably by \citet{zhou2013patient}.

\subsubsection*{Stable Stability Selection}

We first define a grid of regularization values, $\Lambda$, from which the optimal $\lambda$ for Lasso is to be identified. For this purpose, we use as the default choice the $\lambda$ values generated by the \texttt{cv.glmnet} function from the \texttt{glmnet} package in R \citep{friedman2010regularization}, applying a 10-fold cross-validation to the complete dataset $\mathcal{D}$. Alternative regularization values can also be employed. However, a comparison of different methods for generating the regularization grid falls outside the scope of this paper. The goal is to identify the optimal regularization value $\lambda_{\text{stable}} \in \Lambda$ that yields highly stable outcomes with the least possible loss in terms of predictive ability.

The following procedure is applied to each regularization value $\lambda \in \Lambda$. In each iteration of the stability selection process, a sub-sample, comprising half the size of the original dataset $\mathcal{D}$, is randomly selected. The Lasso model is then fitted to this sub-sample using the current value of $\lambda$, and the binary selection outcomes are recorded as a row in the binary selection matrix $M(\lambda) \in \operatorname{mat}(B \times p)$, where $B$ denotes the number of sub-samples. Therefore, in the end, we obtain $|\Lambda|$ distinct selection matrices $M(\lambda)$. In this context, $M(\lambda)_{bj} = 1$ indicates that the $j$th variable is identified as part of the signal set $\hat{S}(\lambda)$ when Lasso is applied to the $b$th sub-sample given $\lambda$. In contrast, $M(\lambda)_{bj} = 0$ signifies that the $j$th variable is classified as belonging to the noise set $\hat{N}(\lambda)$ after applying Lasso to the $b$th sub-sample given $\lambda$.

The stability estimator proposed by \citet{nogueira2018stability} is defined as
\begin{equation}\label{eqn: phi}
    \hat{\Phi}(M(\lambda)) := 1 - \frac{\frac{1}{p}\sum_{j=1}^{p} s_{j}^{2}}{\frac{q(\lambda)}{p}\left(1 - \frac{q(\lambda)}{p}\right)};\quad \lambda \in \Lambda,
\end{equation}
where $s_{j}^{2}$ denotes the unbiased sample variance of the binary selection statuses of the $j$th variable, while $q(\lambda)$ denotes the average number of variables selected under the regularization parameter $\lambda$. The stability estimator $\hat{\Phi}(\cdot)$ is bounded by $\left[-\frac{1}{B-1}, 1\right]$ \citep{nogueira2018stability}; the larger the $\hat{\Phi}(M(\lambda))$, the more stable is $M(\lambda)$. To interpret the stability values of Equation~\eqref{eqn: phi}, \citet{nogueira2018stability} adopted the guidelines proposed by \citet{fliess2004measurment}. According to \citet{nogueira2018stability}, stability values that exceed $0.75$ indicate excellent agreement between selections beyond what would be expected by random chance, while values below $0.4$ signify poor agreement between them. Based on \citet{nogueira2018stability}, stability values that reside between these thresholds are categorized as indicative of intermediate to good stability.

By obtaining all $|\Lambda|$ selection matrices, we can estimate the stability of each using the formulation given in Equation~\eqref{eqn: phi}. We propose that the optimal regularization value, denoted as $\lambda_{\text{stable}}$, is the smallest regularization value at which the stability measure surpasses 0.75, that is, 
\begin{equation}\label{eqn: lambda_stable}
    \lambda_{\text{stable}} := \min\left\{\lambda \in \Lambda \mid \hat{\Phi}(M(\lambda)) \geq 0.75\right\}.
\end{equation}
The threshold value of $0.75$ is somewhat arbitrary and may not be attainable in certain situations. As we will see in Section \ref{s4}, $\lambda_{\text{stable}}$ may not exist in certain practical applications due to instability of the results, which prevents the stability values from exceeding $0.75$. In such cases, aligning with the $\lambda_{\text{1se}}$ strategy introduced by \citet{hastie2009elementss}, we propose 
\begin{equation}\label{eqn: lambda_stable_1se}
    \lambda_{\text{stable-1sd}} := \min\left\{\lambda \in \Lambda \mid \hat{\Phi}(M(\lambda)) \geq \max_{\lambda \in \Lambda} \hat{\Phi}(M(\lambda)) - \text{SD}_{\lambda \in \Lambda}\left(\hat{\Phi}(M(\lambda))\right)\right\},
\end{equation}
where SD$(\cdot)$ represents the standard deviation function.

Having $\lambda_{\text{stable}}$ or $\lambda_{\text{stable-1sd}}$ allows a fresh perspective on variable selection through stability selection. \citet{meinshausen2010stability} introduces Equation~\eqref{eqn:S_stable} to identify stable variables, that is, those with selection frequencies that exceed $\pi_{\text{thr}}$ in the best-case scenario. Instead, we propose
\begin{equation}\label{eqn:S_stable_2}
    \hat{S}^{\text{stable}} := \left\{j \mid \hat{\Pi}_{j}^{\lambda_{\text{stable}}} \geq \pi_{\text{thr}}\right\};\quad j = 1, \ldots p.
\end{equation}
Equation~\eqref{eqn:S_stable_2} represents the variables with selection frequencies higher than $\pi_{\text{thr}}$ under the $\lambda_{\text{stable}}$, which corresponds to the highly stable outcomes. We term this selection criterion `stable stability selection' because it leverages highly stable results to facilitate variable selection. If $\lambda_{\text{stable}}$ does not exist, $\hat{S}^{\text{stable-1sd}}$ can be defined similarly.

Both upper-bounds of the Per-Family Error Rate, mentioned in Section \ref{s1}, provided by \citet{meinshausen2010stability} and \citet{shah2013variable}, depend on the values of $\lambda$ and $\pi_{\text{thr}}$. The calibration of these two values requires the arbitrary selection of one of these two parameters, which can be challenging to justify \citep{bodinier2023automated}. By adopting $\lambda = \lambda_{\text{stable}}$ or $\lambda = \lambda_{\text{stable-1sd}}$, the upper-bound can be explicitly tailored to the data owner's preferences. If a pre-determined $\pi_{\text{thr}}$ is chosen, the Per-Family Error Rate is deterministically obtained; alternatively, if a fixed pre-determined Per-Family Error Rate value is desired, the corresponding $\pi_{\text{thr}}$ is enforced.

The asymptotic upper-bound provided by \citet{meinshausen2010stability} for the Per-Family Error Rate is given by
\begin{equation}\label{eqn:MBPFER}
    \text{PFER}(\Lambda, \pi_{\text{thr}}) = \frac{1}{2\pi_{\text{thr}} - 1}\frac{q_{\Lambda}^2}{p},
\end{equation}
where $q_{\Lambda}$ represents the average number of selected variables over the regularization grid $\Lambda$. In a manner consistent with the approach of \citet{bodinier2023automated}, we can derive a point-wise control version of Equation~\eqref{eqn:MBPFER} by restricting $\Lambda$ to a single value, $\lambda$, without affecting the validity of the original equation. Consequently, we can replace $q_{\Lambda}$ with $q(\lambda)$, as introduced in Equation~\eqref{eqn: phi}. By employing $\lambda_{\text{stable}}$ within this formula and utilizing $q(\lambda_{\text{stable}})$, we establish a two-way control: fixing $\pi_{\text{thr}}$ allows us to determine the upper-bound, and vice versa. A similar approach can be applied to the upper-bounds proposed by \citet{shah2013variable}.

As outlined in Section \ref{s1}, as the number of sub-samples increases, the convergence of $\hat{\Phi}(\cdot)$ to the population stability $\Phi$ is ensured. To illustrate this, we define $M^{(t)}(\lambda)$ as the matrix containing the selection results from the first $t$ sub-samples for a given $\lambda \in \Lambda$, that is, the first $t$ rows of $M(\lambda)$. The objective is to evaluate the stability of $M^{(t)}(\lambda)$ across successive sub-samples in order to determine the appropriate cut-off point for the process, that is, the number of sub-samples required to achieve convergence in stability values. Beyond this threshold, additional sub-samples do not significantly change the stability of $M(\lambda)$. We propose plotting the stability values of the selection matrix over the sequential sub-sampling, that is $\hat{\Phi}(M^{(t)}(\lambda));\; t = 2, 3, \ldots B$, against $t$ to monitor the convergence status of the stability values. Given that the asymptotic distribution of $\hat{\Phi}(\cdot)$ is established, a confidence bound can be drawn along the curve to reflect the inherent uncertainty of the stability estimator.

In Section \ref{s4}, we demonstrate the Pareto optimality of $\lambda_{\text{stable}}$, and present the numerical results obtained from the application of the proposed approach to synthetic and real datasets, which are introduced in Section \ref{s3}.

\section{Datasets}\label{s3}
We evaluate our methodology by performing evaluations on synthetic and two real bioinformatics datasets, as detailed below.

\subsubsection*{Synthetic Data}

As synthetic data, we consider a dataset with a sample size of $n = 50$ that includes $p = 500$ predictor variables. The predictor variables $\boldsymbol{x}^\intercal_{i}$ are independently drawn from the Normal distribution $\mathcal{N}(0, \Sigma)$, where $\Sigma \in \operatorname{mat}(p,p)$ has a diagonal of ones and $\sigma_{jk} = \rho^{|j - k|}$ for $j \neq k$ where $\rho \in \{0.2, 0.5, 0.8\}$. The response variable is linearly dependent solely on the first two predictor variables, characterized by the coefficient vector $\boldsymbol{\beta} = (1.5, 1.1, 0, \ldots0)^{\top}$, and the error term $\boldsymbol{\varepsilon}$ is an i.i.d. sample from the standard Normal distribution $\mathcal{N}(0, 1)$.

\subsubsection*{Riboflavin Data}

For the first real example, we use the well-established `Riboflavin' dataset, which focuses on the production of riboflavin (vitamin B2) from various Bacillus subtilis. This dataset, provided by the Dutch State Mines Nutritional Products, is accessible via \texttt{hdi} R package \citep{hdi-package}. It comprises a single continuous response variable that represents the logarithm of the riboflavin production rate, alongside $p = 4,088$ covariates, which correspond to the logarithm of expression levels for 4,088 bacterial genes. The primary objective of analyzing this dataset is to identify genes that are associated with riboflavin production, with the ultimate goal of genetically engineering bacteria to enhance riboflavin yield. Data were collected from $n = 71$ relatively homogeneous samples, which were repeatedly hybridized during a feed-batch fermentation process involving different engineered strains and varying fermentation conditions. \citet{buhlmann2014high} employed stability selection with Lasso and identified three genes—\texttt{LYSC\_at}, \texttt{YOAB\_at}, and \texttt{YXLD\_at}—as relevant genes in this problem.

\subsubsection*{Affymetrix Rat Genome 230 2.0 Array}

As an additional real-world example, we investigate `Affymetrix Rat Genome 230 2.0 Array' microarray data introduced by \citet{scheetz2006regulation}. This dataset comprises $n = 120$ twelve-week-old male rats, with expression levels recorded for nearly 32,000 gene probes for each rat. The primary objective of this analysis is to identify the probes most strongly associated with the expression level of the TRIM32 probe (\texttt{1389163\_at}), which has been linked to the development of Bardet-Biedl syndrome \citep{chiang2006homozygosity}. This genetically heterogeneous disorder affects multiple organ systems, including the retina. In accordance with the pre-processing steps outlined by \citet{huang2008adaptive}, we excluded gene probes with a maximum expression level below the $25$th percentile and those exhibiting an expression range smaller than $2$. This filtering process yielded a refined set of $p = 3,083$ gene probes that demonstrated sufficient expression and variability for further analysis.

\section{Results}\label{s4}
In this section, we illustrate the efficacy of our methodology through its application to the synthetic and real datasets introduced in Section \ref{s3}. We also demonstrate in Corollary \ref{corollary1} that $\lambda_{\text{stable}}$ constitutes a Pareto optimal solution in the trade-off between stability and accuracy in variable selection. We used the GitHub repository associated with \citet{nogueira2018stability} to implement their stability estimator and to obtain confidence intervals\footnote{\url{https://github.com/nogueirs/JMLR2018}}. 

\subsubsection*{Synthetic Data}
Figure \ref{fig:three_stabilities} illustrates the selection stability of the stability selection results on a grid of regularization values, applied to the synthetic datasets introduced in Section \ref{s3}, using three different values of $\rho$ and choosing the number of sub-samples $B = 500$.  The horizontal dashed red lines indicate the stability thresholds of $0.4$ and $0.75$ as discussed in Section \ref{s2}. As illustrated in Figure \ref{fig:three_stabilities}, the regularization value that minimizes the cross-validation error, $\lambda_{\text{min}}$, falls within the poorly stable region. In addition, $\lambda_{\text{1se}}$ lies within the poor to intermediate stability regions. In contrast, $\lambda_{\text{stable}}$, introduced in Equation~\eqref{eqn: lambda_stable}, is specifically designed to fall within the region of excellent stability, where applicable.

\begin{figure}[htbp]
    \centering
    \begin{subfigure}{0.6\textwidth}
        \centering
        \includegraphics[width=\textwidth]{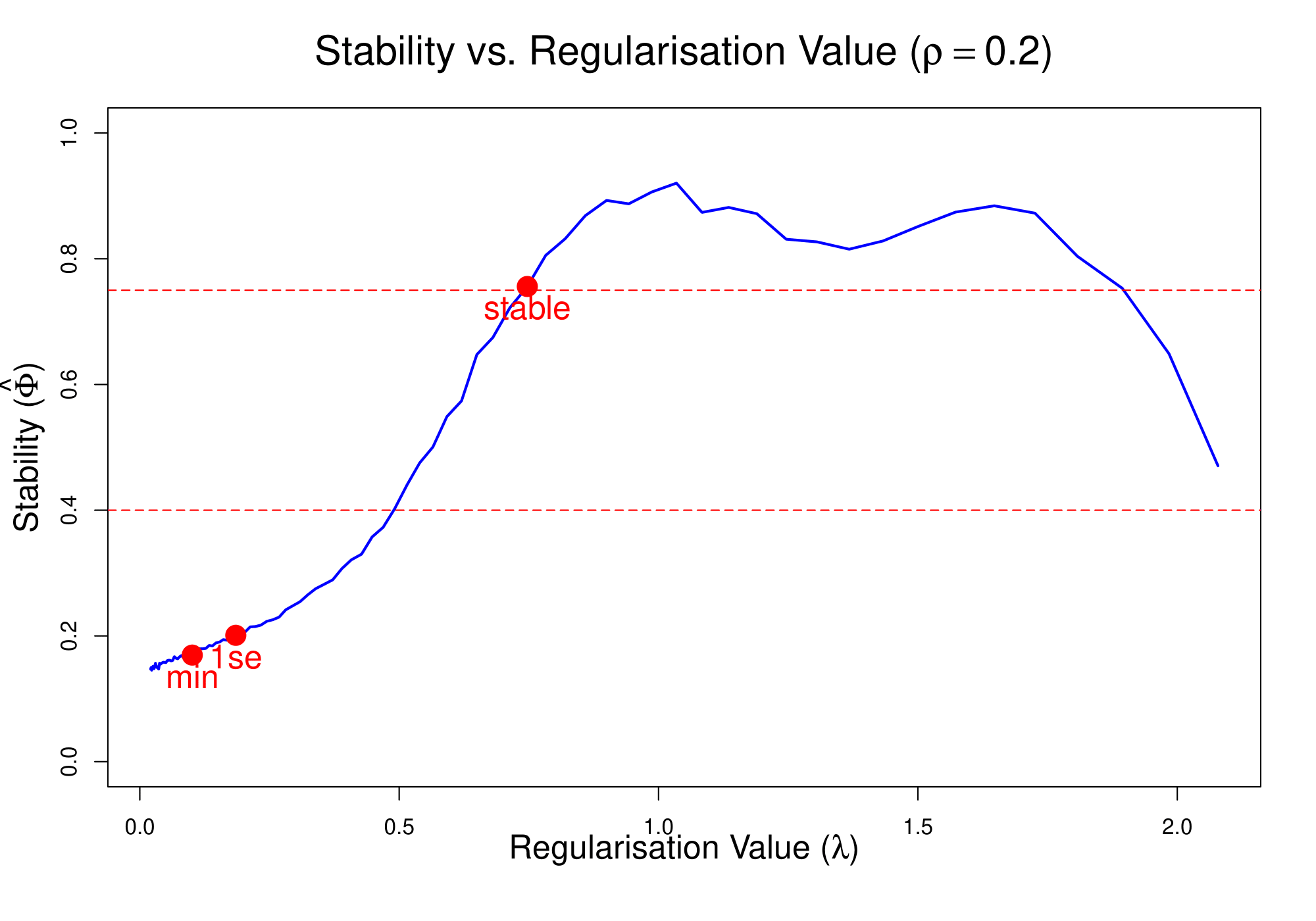}
        \caption{$\rho = 0.2$}
        \label{fig:stab1}
    \end{subfigure}
    \vfill
    \begin{subfigure}{0.6\textwidth}
        \centering
        \includegraphics[width=\textwidth]{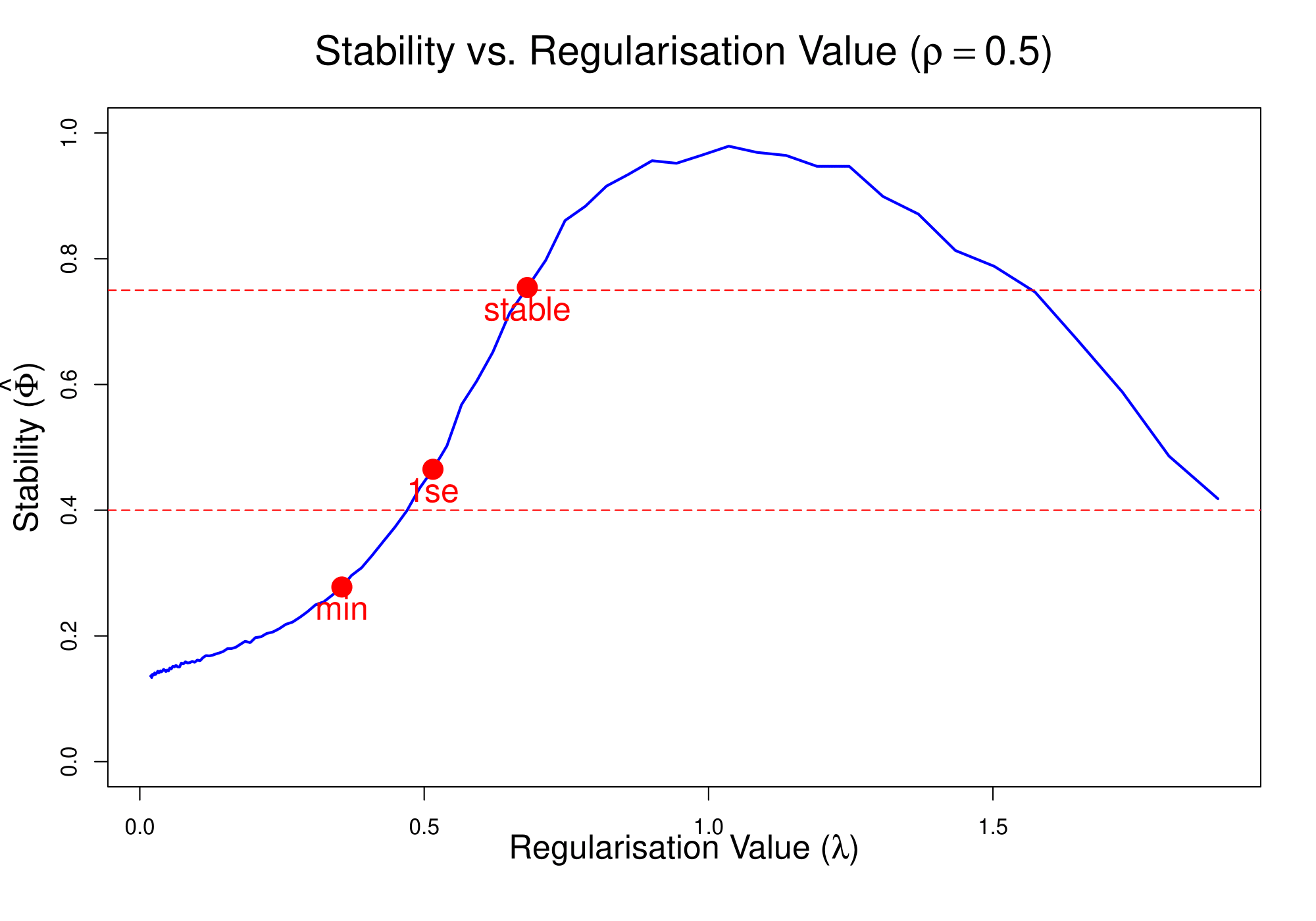}
        \caption{$\rho = 0.5$}
        \label{fig:stab2}
    \end{subfigure}
    \vfill
    \begin{subfigure}{0.6\textwidth}
        \centering
        \includegraphics[width=\textwidth]{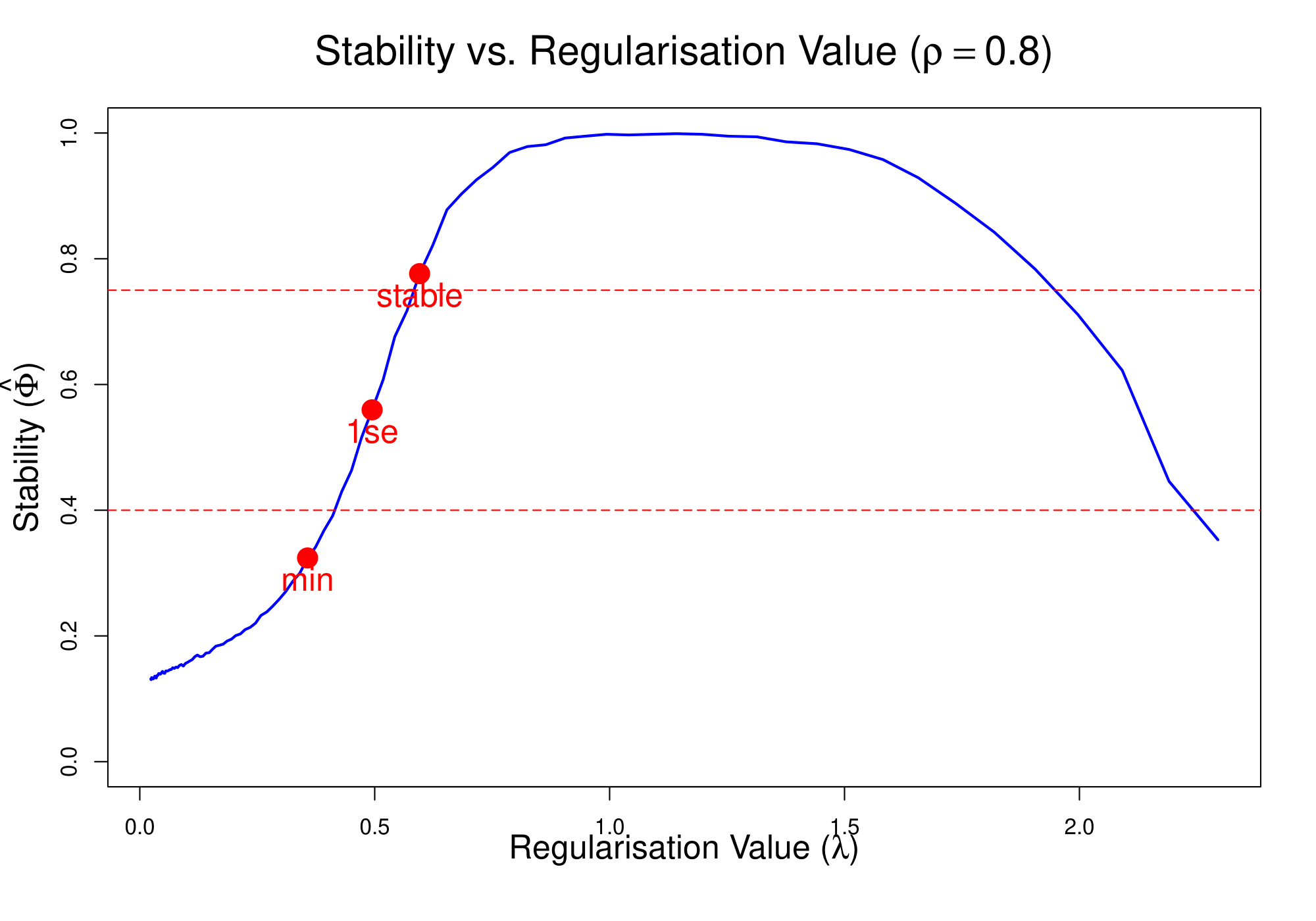} 
        \caption{$\rho = 0.8$}
        \label{fig:stab3}
    \end{subfigure}
    \caption{Selection stability of stability selection over the regularization grid on the synthetic data}
    \label{fig:three_stabilities}
\end{figure}

Notably, in all three correlation scenarios, the three regularization selection methods successfully identified relevant variables with high selection frequencies, with a minimum selection frequency of $0.994$. However, since $\lambda_{\text{stable}}$ is larger than the two other regularization values, it applies a stronger shrinkage to variables. To further examine this in terms of model accuracy, for each correlation scenario, we generate $n' = 25$ additional test samples from the corresponding distribution described in Section \ref{s3}. At each iteration of stability selection, we predict the response variable for the test samples using the estimated model and, ultimately, we aggregate all the mean squared error (MSE) values obtained over the corresponding $\lambda$ by averaging.

\begin{figure}[htbp]
    \centering
    \begin{subfigure}{0.6\textwidth}
        \centering
        \includegraphics[width=\textwidth]{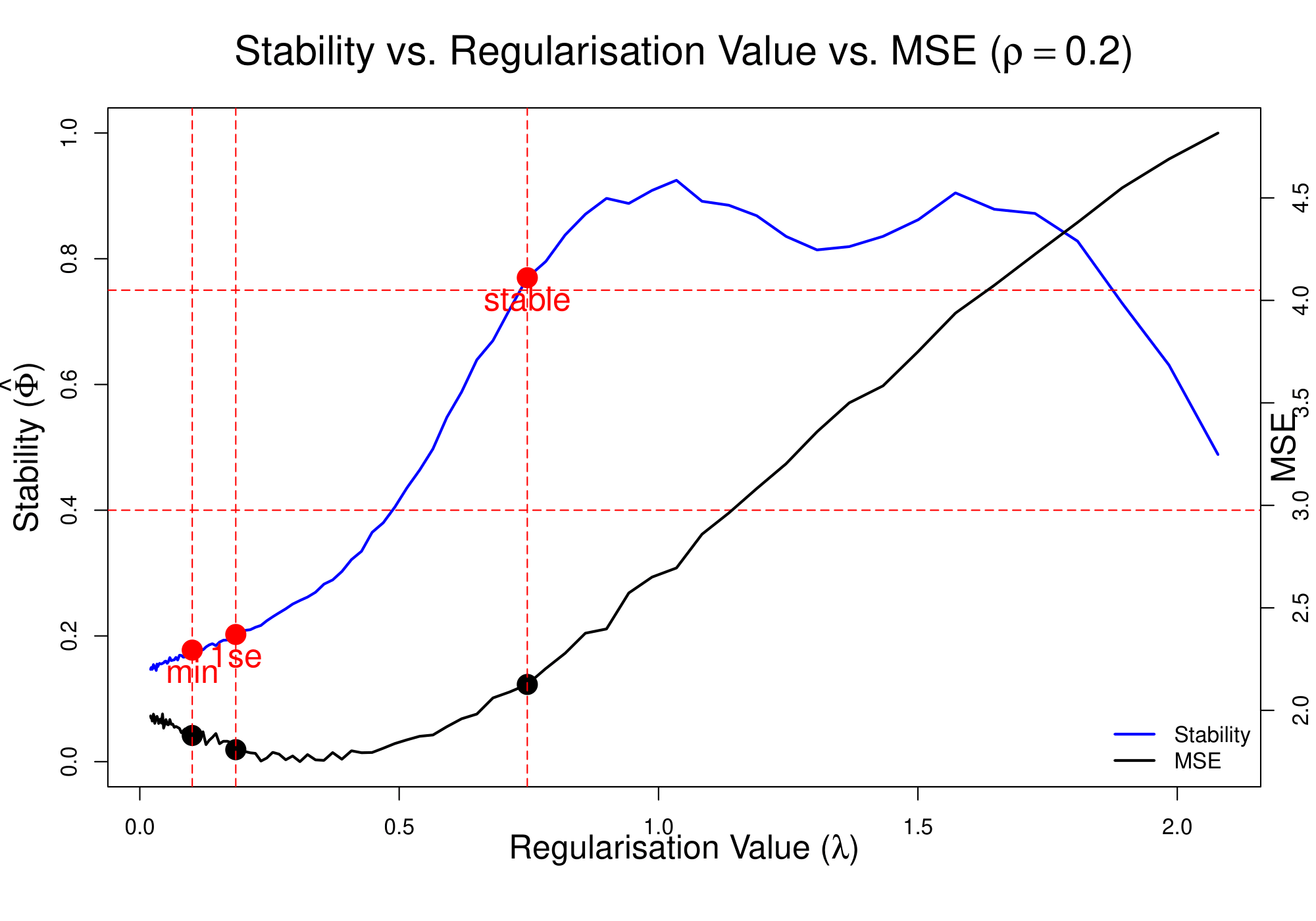}
        \caption{$\rho = 0.2$}
        \label{fig:stabMSE1}
    \end{subfigure}
    \vfill
    \begin{subfigure}{0.6\textwidth}
        \centering
        \includegraphics[width=\textwidth]{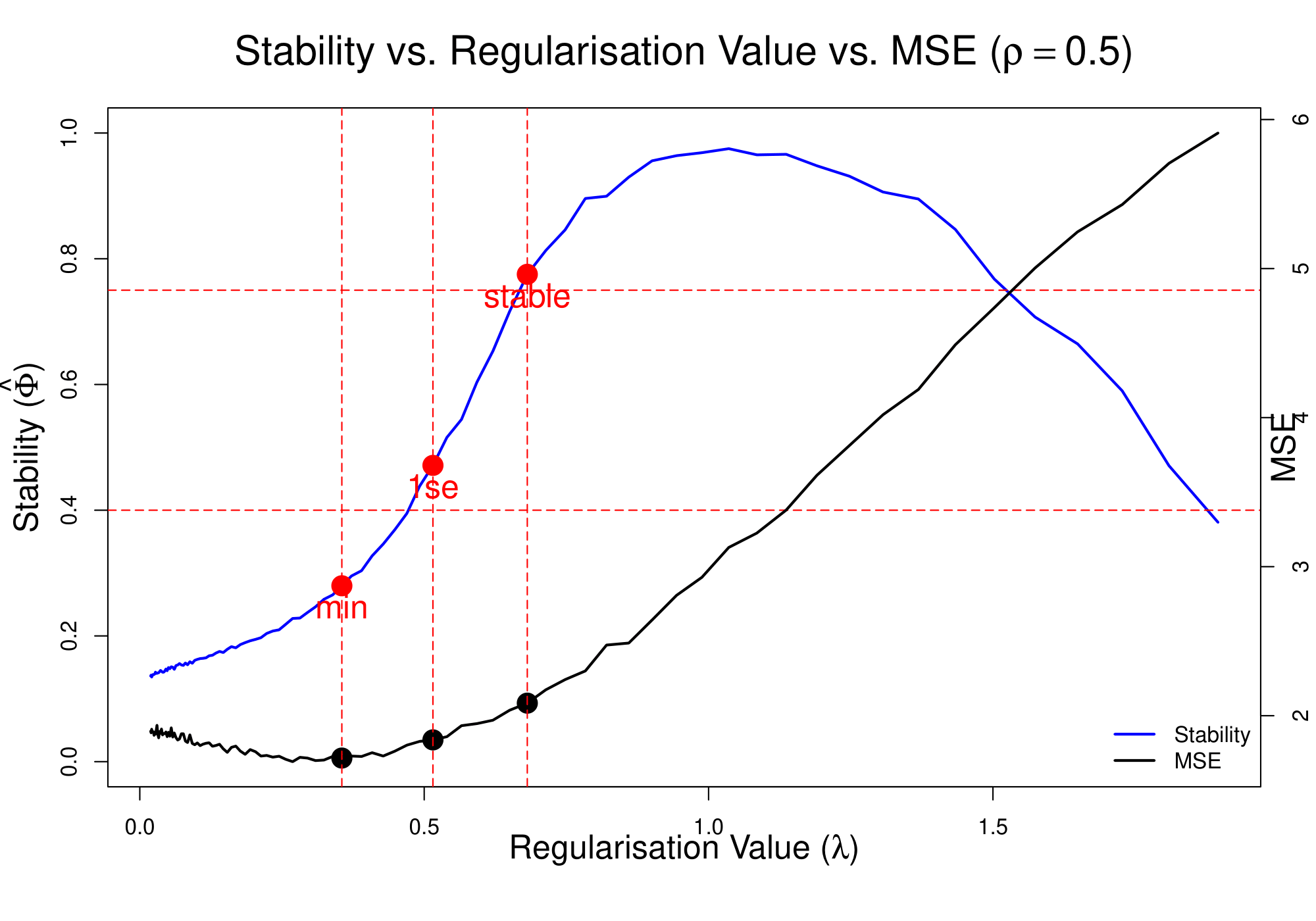}
        \caption{$\rho = 0.5$}
        \label{fig:stabMSE2}
    \end{subfigure}
    \vfill
    \begin{subfigure}{0.6\textwidth}
        \centering
        \includegraphics[width=\textwidth]{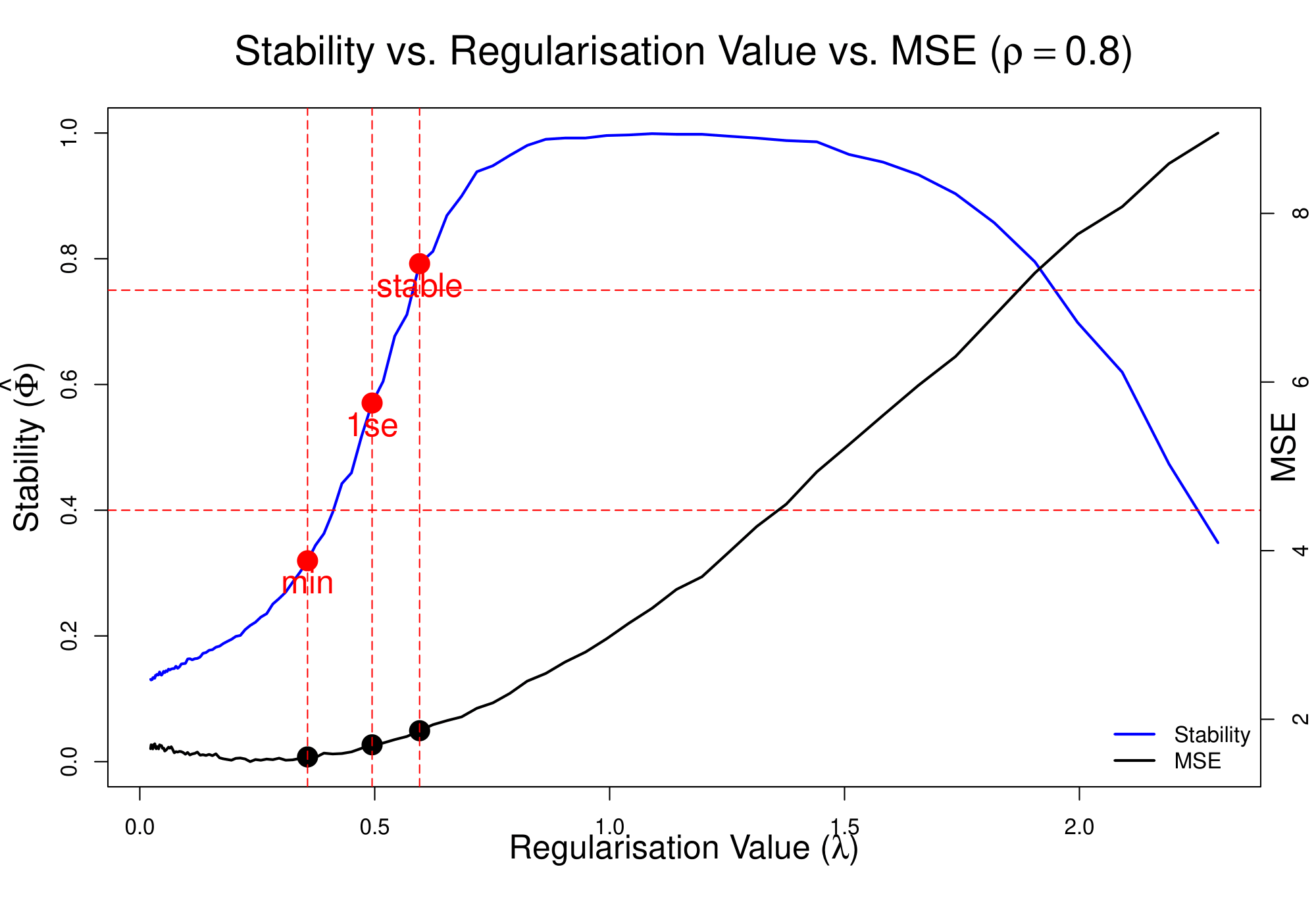} 
        \caption{$\rho = 0.8$}
        \label{fig:stabMSE3}
    \end{subfigure}
    \caption{Selection stability and MSE of stability selection over the regularization grid on the synthetic data}
    \label{fig:three_stabMSE}
\end{figure}

As demonstrated in Figure \ref{fig:three_stabMSE}, across all three correlation scenarios, the improvement in stability when transitioning from $\lambda_{\text{1se}}$ to $\lambda_{\text{stable}}$ far outweighs the reduction in accuracy. This outcome aligns closely with the findings of \citet{nogueira2018stability}, who noted that ``All these observations show that stability can potentially be increased without loss of predictive power''. 

\begin{figure}[htbp]
    \centering
    \begin{subfigure}{0.6\textwidth}
        \centering
        \includegraphics[width=\textwidth]{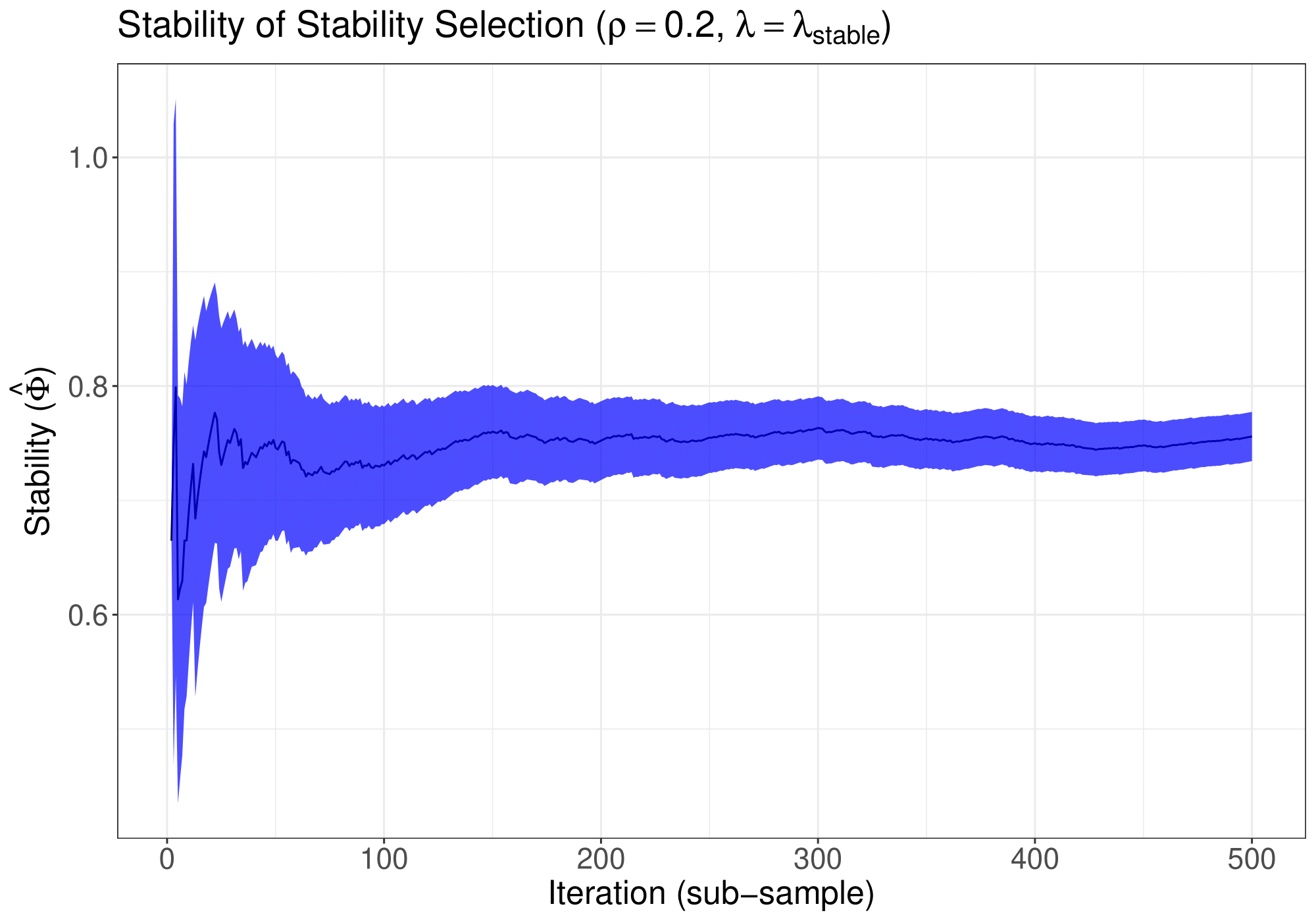}
        \caption{$\rho = 0.2$}
        \label{fig:convergence1}
    \end{subfigure}
    \vfill
    \begin{subfigure}{0.6\textwidth}
        \centering
        \includegraphics[width=\textwidth]{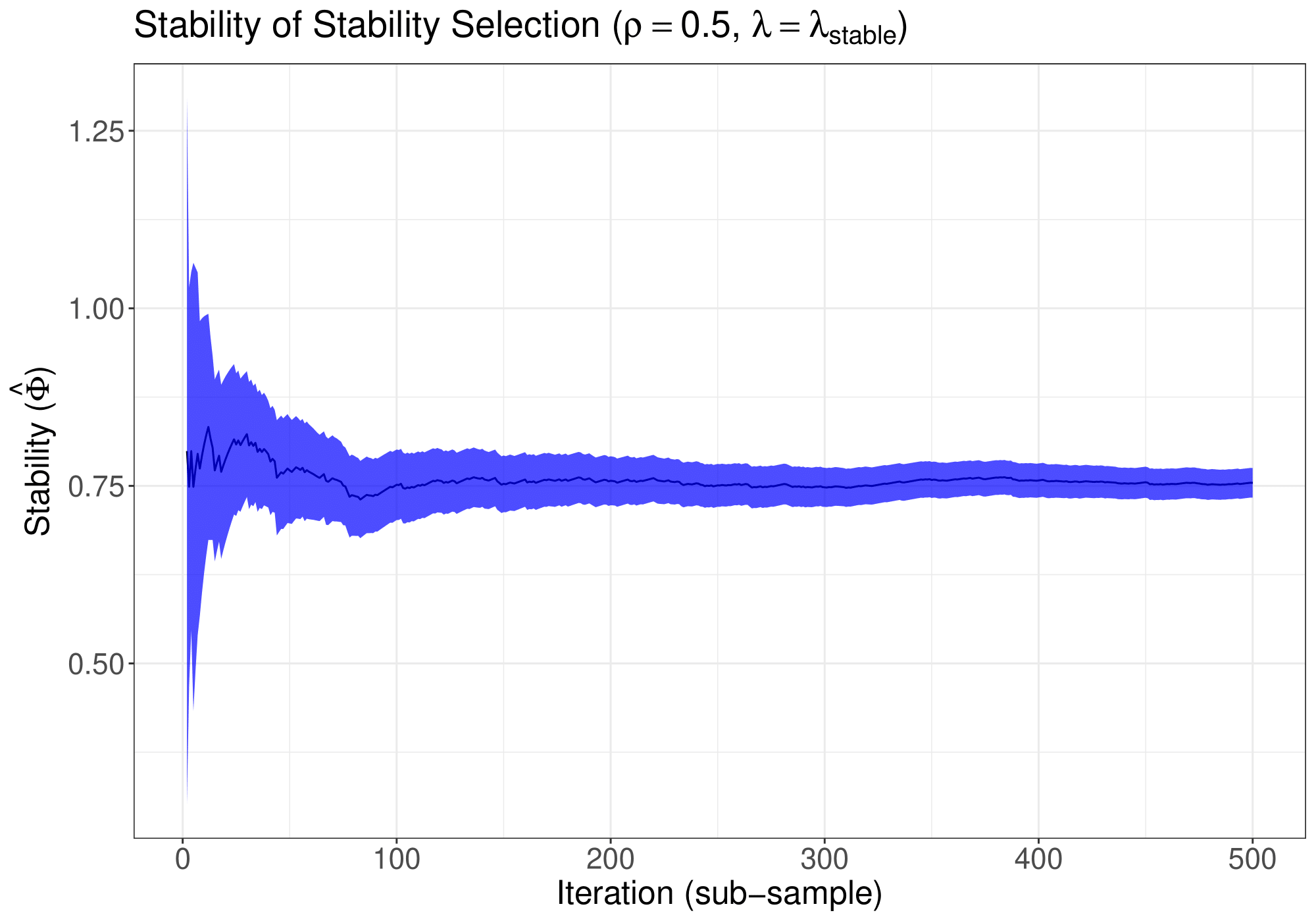}
        \caption{$\rho = 0.5$}
        \label{fig:convergence2}
    \end{subfigure}
    \vfill
    \begin{subfigure}{0.6\textwidth}
        \centering
        \includegraphics[width=\textwidth]{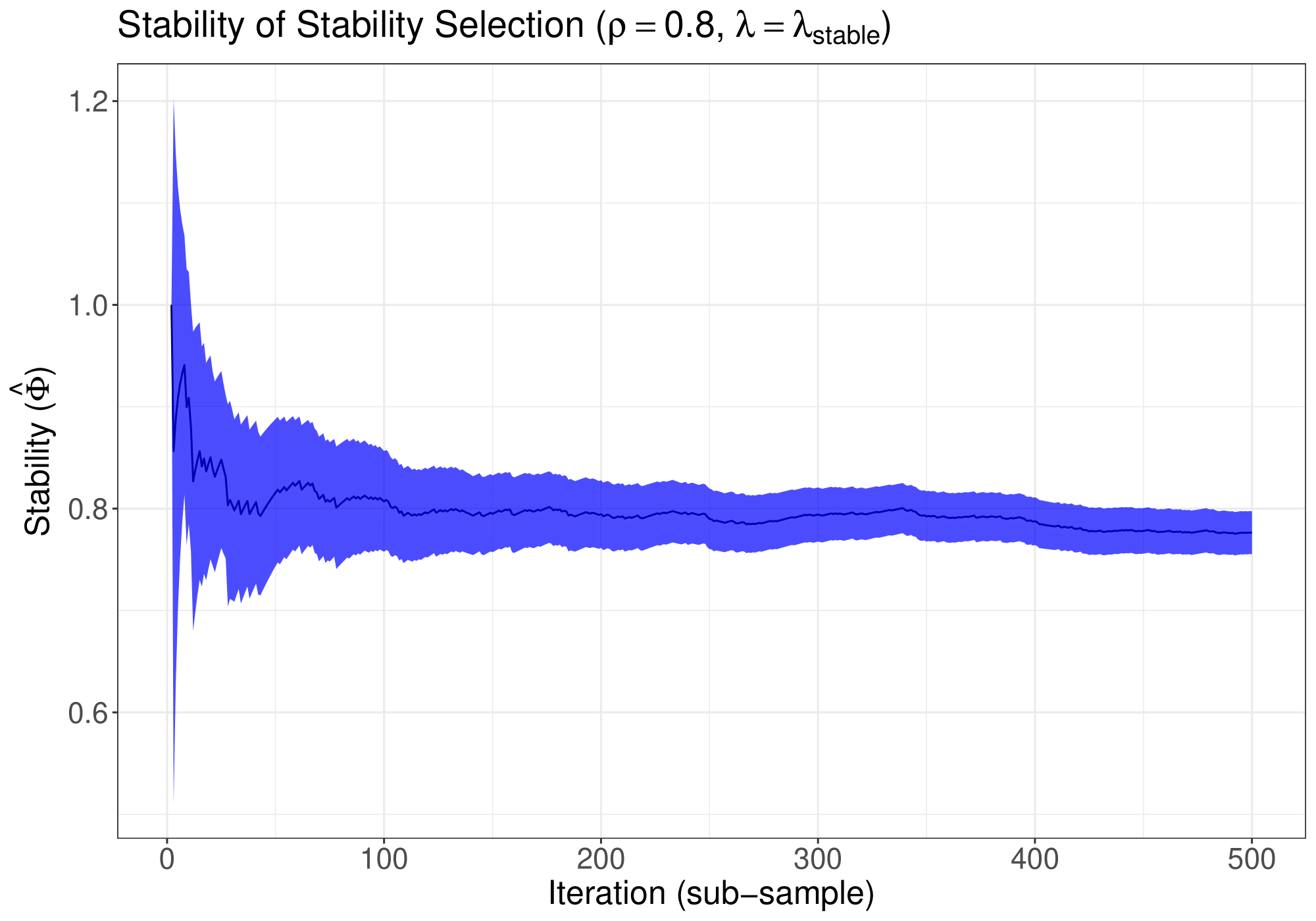} 
        \caption{$\rho = 0.8$}
        \label{fig:convergence3}
    \end{subfigure}
    \caption{Selection stability of stability selection over sequential sub-sampling on the synthetic data}
    \label{fig:three_convergence}
\end{figure}

To demonstrate the convergence of the stability estimator $\hat{\Phi}(\cdot)$, for each correlation scenario, we used $\lambda_{\text{stable}}$ and calculated the stability of the selection matrix $M(\lambda_{\text{stable}})$ through sequential sub-sampling, using Equation~\eqref{eqn: phi}. The stability values across iterations are shown in Figure \ref{fig:three_convergence}. The blue shading around the line represents the $95\%$ confidence interval for $\hat{\Phi}(M(\lambda_{\text{stable}}))$. As can be observed, the interval narrows with increasing iterations, indicating a reduction in the estimator's uncertainty. Figure \ref{fig:three_convergence} reveals that, across all three scenarios, after approximately $B^{*} \approx 200$ iterations, the stability estimator converges, with no significant change thereafter in its value. By monitoring the stability trajectory throughout the process, valuable insight can be gained in determining the optimal cut-off point, that is, the optimal number of sub-samples in terms of stability of the results.

\subsubsection*{Riboflavin Data}

Next, we apply stable stability selection to the riboflavin dataset. As described in Section \ref{s3}, the dataset consists of $p = 4,088$ genes, the main objective being to identify the key genes that are associated with riboflavin production. As before, we choose the number of sub-samples $B = 500$. The predictor variables are standardized using the \texttt{scale} function prior
to being input into Lasso. In this problem, $\lambda_{\text{stable}}$ does not exist, so we use $\lambda_{\text{stable-1sd}}$ instead. After $B$ iterations, four genes \texttt{YXLD\_at} ($\hat{\Pi}_{\texttt{YXLD\_at}}^{\lambda_{\text{stable-1sd}}} = 0.606$), \texttt{YOAB\_at} ($\hat{\Pi}_{\texttt{YOAB\_at}}^{\lambda_{\text{stable-1sd}}} = 0.558$), \texttt{LYSC\_at} ($\hat{\Pi}_{\texttt{LYSC\_at}}^{\lambda_{\text{stable-1sd}}} = 0.540$), and \texttt{YCKE\_at} ($\hat{\Pi}_{\texttt{YCKE\_at}}^{\lambda_{\text{stable-1sd}}} = 0.532$) have selection frequencies greater than $0.5$.

Figure \ref{fig:Riboflavin} illustrates the stability of stability selection results when applied to the riboflavin dataset. As shown in Figure \ref{fig:Riboflavin1}, $\lambda_{\text{stable}}$ does not exist in this example, as the stability values do not surpass the $0.75$ threshold. Figure \ref{fig:Riboflavin2}, generated with $\lambda_{\text{stable-1sd}}$, indicates the convergence slightly above $0.2$ after about $200$ iterations. These results imply that stability selection with Lasso exhibits poor selection stability on this dataset.

\subsubsection*{Affymetrix Rat Genome 230 2.0 Array}

Finally, we apply our methodology to the rat microarray data. As mentioned in Section \ref{s3}, the data consists of $p = 3,083$ gene probes. The main aim for this data is to identify probes that are associated with the TRIM32 probe. Again, we choose the number of sub-samples $B = 500$. The predictor variables are standardized using the \texttt{scale} function prior
to being input into Lasso. Due to the instability of the results, $\lambda_{\text{stable}}$ does not exist in this problem. After $B$ iterations, three probes have selection frequencies greater than $0.5$: probe \texttt{1390539\_at} ($\hat{\Pi}_{\texttt{1390539\_at}}^{\lambda_{\text{stable-1sd}}} = 0.640$), probe \texttt{1389457\_at} ($\hat{\Pi}_{\texttt{1389457\_at}}^{\lambda_{\text{stable-1sd}}} = 0.570$), and probe \texttt{1376747\_at} ($\hat{\Pi}_{\texttt{1376747\_at}}^{\lambda_{\text{stable-1sd}}} = 0.564$). 

As shown in Figure \ref{fig:Rat}, similar to the results of the riboflavin dataset, the results cannot be considered stable. Specifically, Figure \ref{fig:Rat1} reveals that $\lambda_{\text{stable}}$ does not exist, as the stability values do not exceed the threshold of $0.75$. For Figure \ref{fig:Rat2}, we used $\lambda_{\text{stable-1sd}}$. Figure \ref{fig:Rat2} shows that the stability values converged to approximately $0.15$.

The results illustrated in figures \ref{fig:Riboflavin} and \ref{fig:Rat} indicate that while stability selection is a valuable approach, more can be revealed by focusing on the stability of stability selection results. As the complexity of the data (in terms of interdependencies between variables, number of variables subject to selection, etc.) increases, the stability values demonstrate a significant decline. Therefore, it is essential to evaluate the stability of its selections before drawing any conclusions about the relevance of variables based on the results of stability selection.

\subsubsection*{Stability and Accuracy from a Pareto Front Perspective}

\citet{nogueira2018stability} suggested that stability and accuracy can be analyzed using the concept of the Pareto front \citep{pareto1896cours}, which identifies regularization values that are not dominated by any other in terms of both criteria. A regularization value is considered Pareto optimal if no other value on the regularization grid offers both higher stability and higher accuracy. We call this approach stability-accuracy selection, which seeks to balance these two metrics. 

As an experiment, in our synthetic data analysis we consider the negative of the mean squared error (-MSE) as a measure of prediction accuracy. Given that a Pareto optimal solution is not necessarily unique, we select the Pareto solution that maximizes the sum of accuracy and stability. In Figure \ref{fig:pareto1}, $\lambda_{\text{Pareto}}$ is located close to $\lambda_{\text{stable}}$, while in Figures \ref{fig:pareto2} and \ref{fig:pareto3}, they coincide exactly. Interestingly, $\lambda_{\text{stable}}$ is also a Pareto solution in Figure \ref{fig:pareto1}. It is now pertinent to discuss the relationship between $\lambda_{\text{stable}}$ and Pareto optimality.

\begin{corollary}\label{corollary1}
    Let $\lambda_{\text{stable}}$ be defined as in Equation~\eqref{eqn: lambda_stable} and assume that it exists. If the stability curve is non-decreasing up to $\lambda_{\text{stable}}$, and the loss function is non-decreasing after $\lambda_{\text{stable}}$, then $\lambda_{\text{stable}}$ is a Pareto optimal solution.
\end{corollary}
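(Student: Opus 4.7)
The plan is to establish Pareto optimality of $\lambda_{\text{stable}}$ by a direct case split on whether a competing $\lambda \in \Lambda$ lies below or above $\lambda_{\text{stable}}$, using one of the two monotonicity hypotheses to close each side. Writing $\Phi(\lambda) := \hat{\Phi}(M(\lambda))$ and letting $L(\lambda)$ denote the loss (so accuracy is $-L(\lambda)$), Pareto optimality amounts to showing that no $\lambda \in \Lambda \setminus \{\lambda_{\text{stable}}\}$ satisfies both $\Phi(\lambda) \geq \Phi(\lambda_{\text{stable}})$ and $L(\lambda) \leq L(\lambda_{\text{stable}})$ with a strict inequality in at least one coordinate.

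First I would handle the range $\lambda < \lambda_{\text{stable}}$. The defining property in Equation~\eqref{eqn: lambda_stable} forces $\Phi(\lambda) < 0.75 \leq \Phi(\lambda_{\text{stable}})$ for any such $\lambda$, which immediately rules out improvement in stability; the monotonicity assumption on $\Phi$ up to $\lambda_{\text{stable}}$ reinforces this structurally, so no stability spike below $\lambda_{\text{stable}}$ can interfere. Next I would handle $\lambda > \lambda_{\text{stable}}$, where the second hypothesis gives $L(\lambda) \geq L(\lambda_{\text{stable}})$ and hence accuracy at $\lambda$ is no better than at $\lambda_{\text{stable}}$. In neither regime can a candidate improve strictly on both coordinates simultaneously, so $\lambda_{\text{stable}}$ is Pareto optimal.

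The proof is short, and I do not anticipate a substantive technical obstacle; the only point meriting care is the convention of Pareto optimality in use. Under the weak formulation, which requires strict improvement in every objective for domination, the case split above closes immediately. Under the stronger formulation, which allows equality in one coordinate with strict improvement in the other, the boundary cases $\Phi(\lambda) = \Phi(\lambda_{\text{stable}})$ and $L(\lambda) = L(\lambda_{\text{stable}})$ must also be ruled out; these are excluded respectively by the strict gap $\Phi(\lambda) < 0.75$ below $\lambda_{\text{stable}}$ and by strictness of the non-decreasing-loss hypothesis above it. Thus the corollary is essentially an unpacking of the two monotonicity assumptions together with the defining inequality of $\lambda_{\text{stable}}$.
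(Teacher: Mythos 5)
Your proof is correct and follows essentially the same route as the paper: a case split on $\lambda < \lambda_{\text{stable}}$ versus $\lambda > \lambda_{\text{stable}}$, ruling out domination in stability on the left and in accuracy on the right. Your additional observation that the minimality in Equation~\eqref{eqn: lambda_stable} already yields $\Phi(\lambda) < 0.75 \leq \Phi(\lambda_{\text{stable}})$ for $\lambda < \lambda_{\text{stable}}$ (making the left-hand monotonicity assumption redundant), and your care about the weak versus strong domination convention, are minor refinements of the same argument; the paper's stated definition of Pareto optimality is the weak one, under which both proofs close immediately.
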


\begin{proof}
    Since the stability curve is non-decreasing prior to $\lambda_{\text{stable}}$, any regularization value $\lambda < \lambda_{\text{stable}} \in \Lambda$ is at most as stable as $\lambda_{\text{stable}}$. Therefore, these values do not dominate $\lambda_{\text{stable}}$ in terms of stability. Similarly, since the loss function is non-decreasing after $\lambda_{\text{stable}}$, any regularization value $\lambda > \lambda_{\text{stable}} \in \Lambda$ is at most as accurate as $\lambda_{\text{stable}}$. Hence, these values also do not dominate $\lambda_{\text{stable}}$ in terms of accuracy. Since $\lambda_{\text{stable}}$ is not dominated by any values less than it in terms of stability nor by any values greater than it in terms of accuracy, we conclude that $\lambda_{\text{stable}}$ constitutes a Pareto optimal solution.
\end{proof}

\begin{figure}[H]
    \centering
    \begin{subfigure}{0.8\textwidth}
        \centering
        \includegraphics[width=\textwidth]{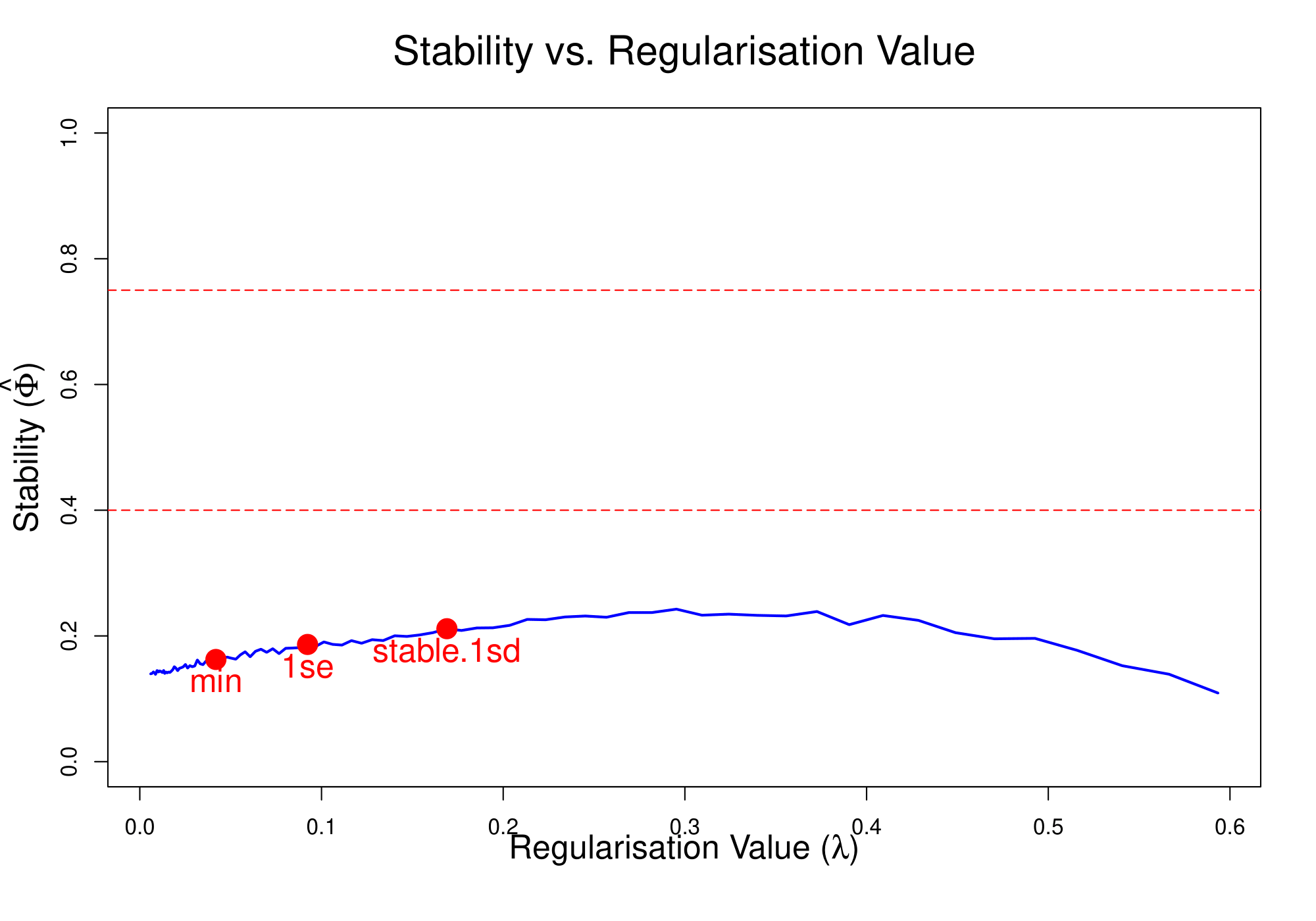}
        \caption{}
        \label{fig:Riboflavin1}
    \end{subfigure}
    \vfill
    \begin{subfigure}{0.8\textwidth}
        \centering
        \includegraphics[width=\textwidth]{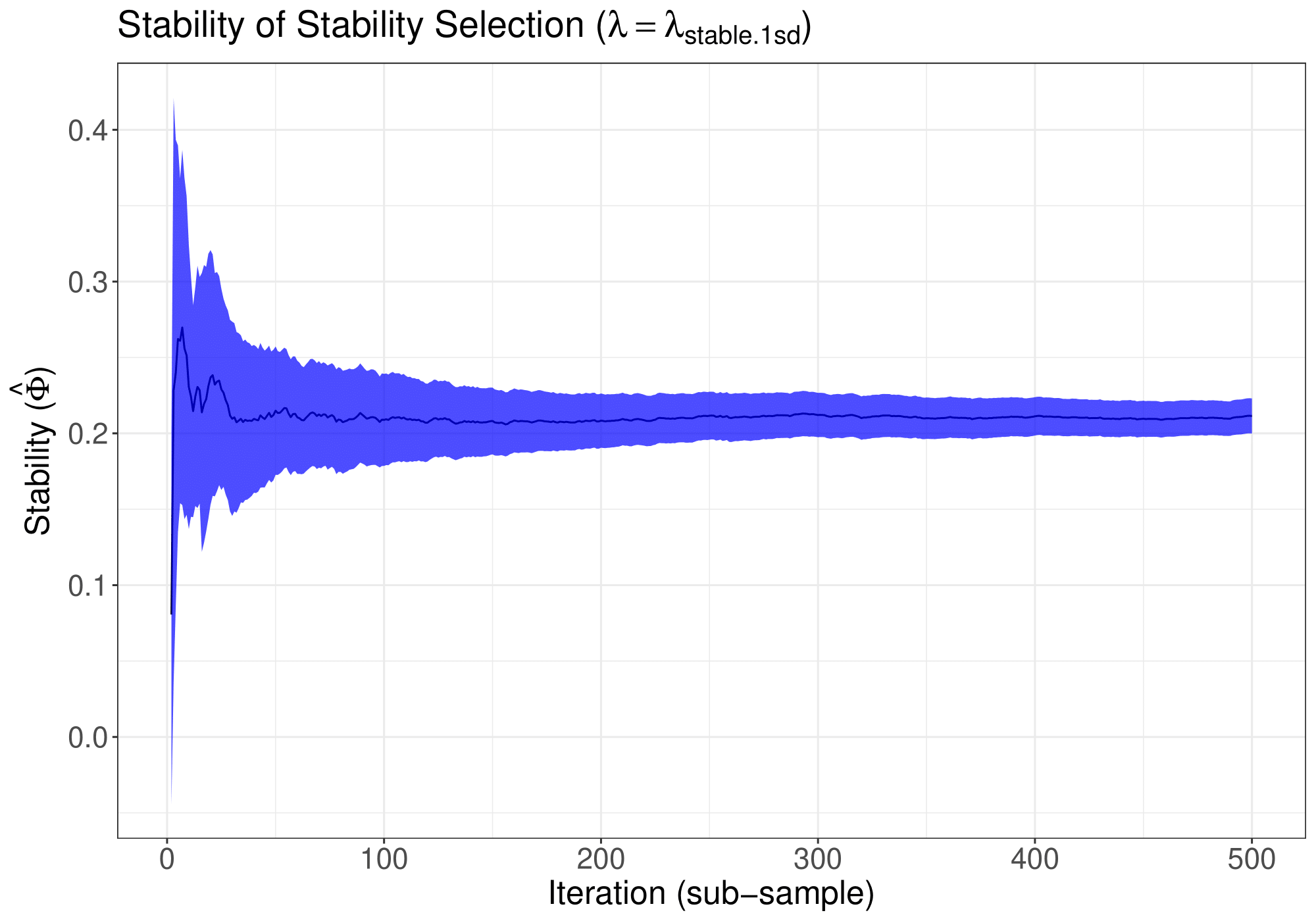}
        \caption{}
        \label{fig:Riboflavin2}
    \end{subfigure}
    \caption{Selection stability of stability selection on riboflavin data}
    \label{fig:Riboflavin}
\end{figure}

\begin{figure}[H]
    \centering
    \begin{subfigure}{0.8\textwidth}
        \centering
        \includegraphics[width=\textwidth]{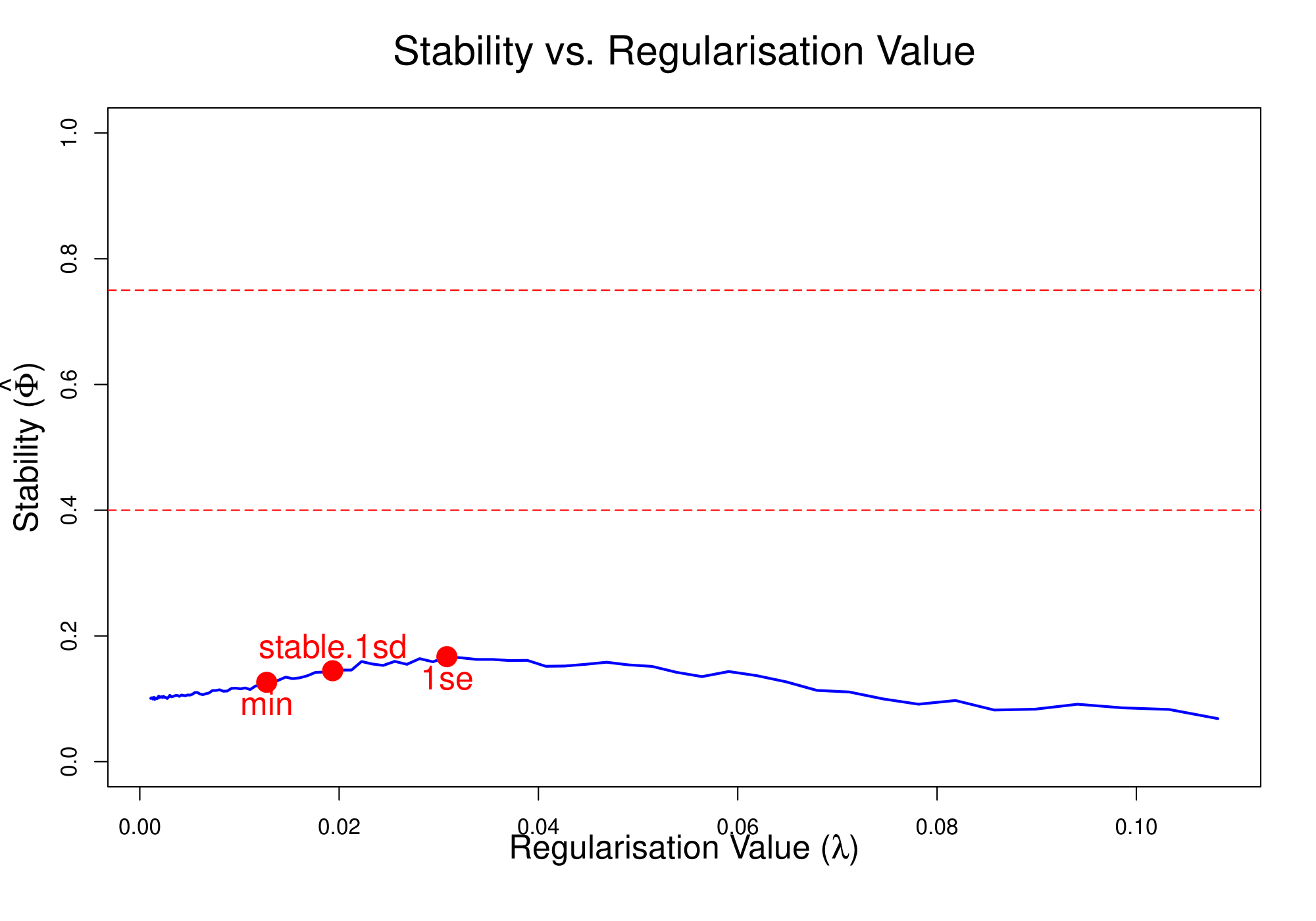}
        \caption{}
        \label{fig:Rat1}
    \end{subfigure}
    \vfill
    \begin{subfigure}{0.8\textwidth}
        \centering
        \includegraphics[width=\textwidth]{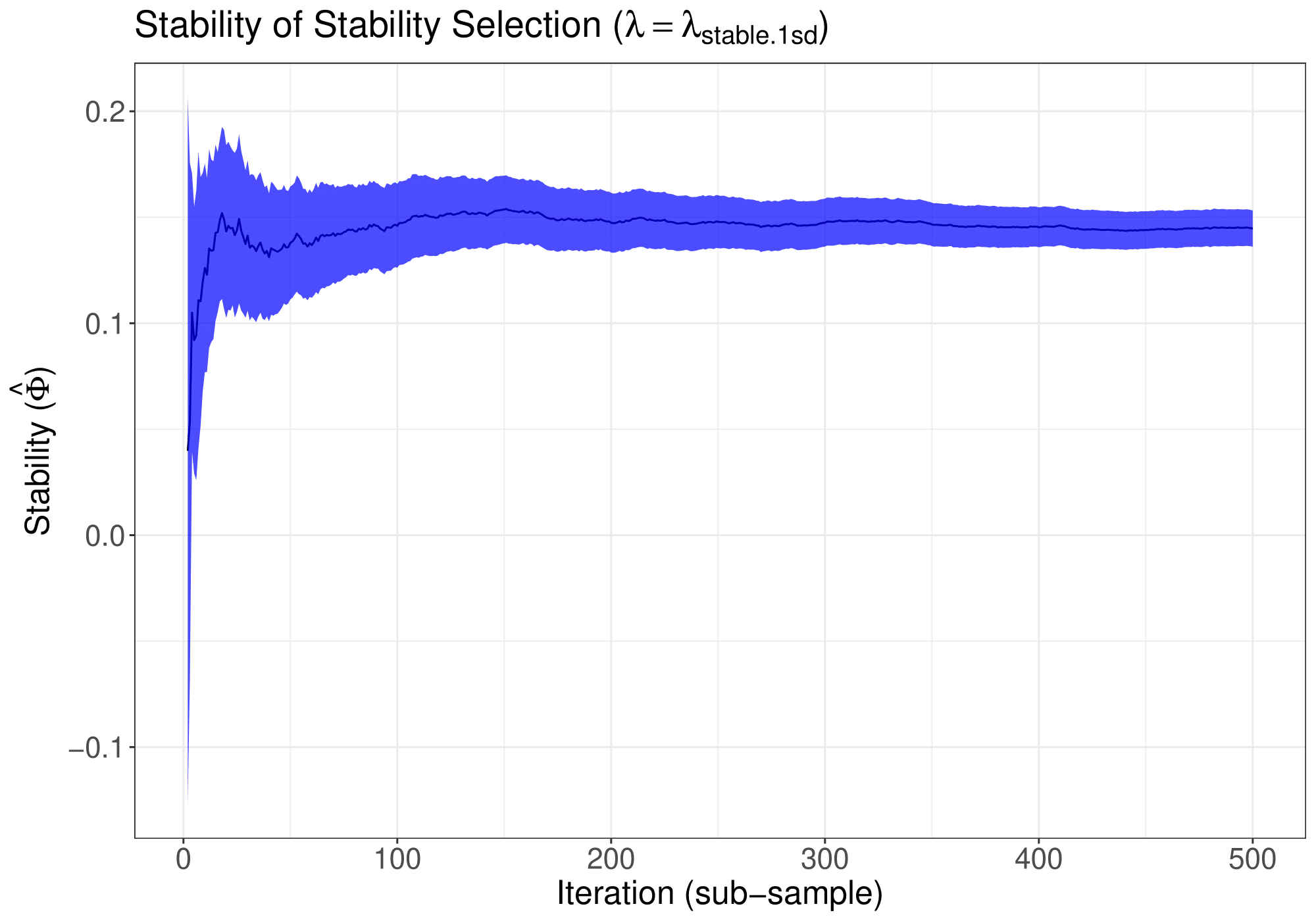}
        \caption{}
        \label{fig:Rat2}
    \end{subfigure}
    \caption{Selection stability of stability selection on rat microarray data}
    \label{fig:Rat}
\end{figure}

\begin{figure}[H]
    \centering
    \begin{subfigure}{0.6\textwidth}
        \centering
        \includegraphics[width=\textwidth]{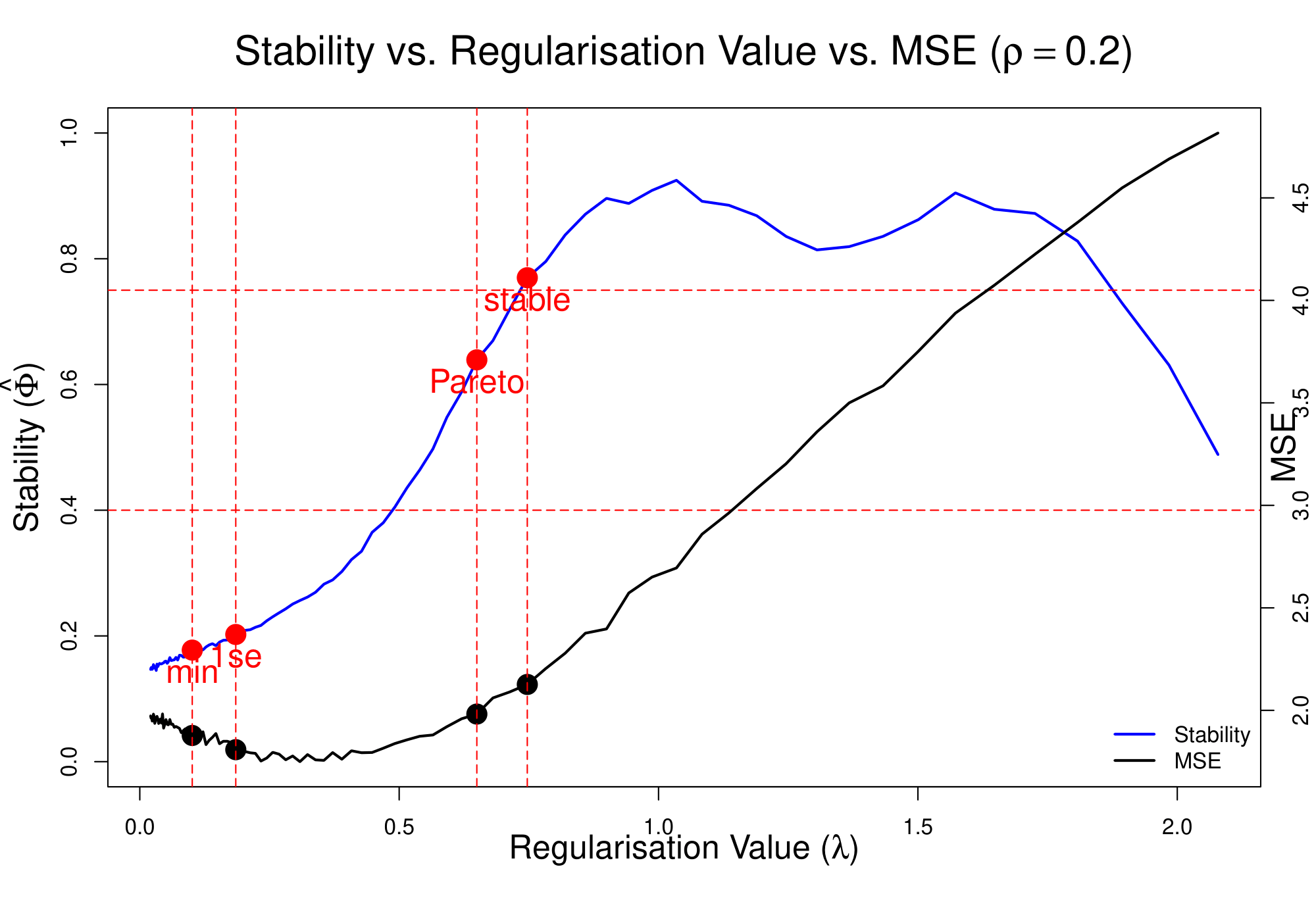}
        \caption{$\rho = 0.2$}
        \label{fig:pareto1}
    \end{subfigure}
    \vfill
    \begin{subfigure}{0.6\textwidth}
        \centering
        \includegraphics[width=\textwidth]{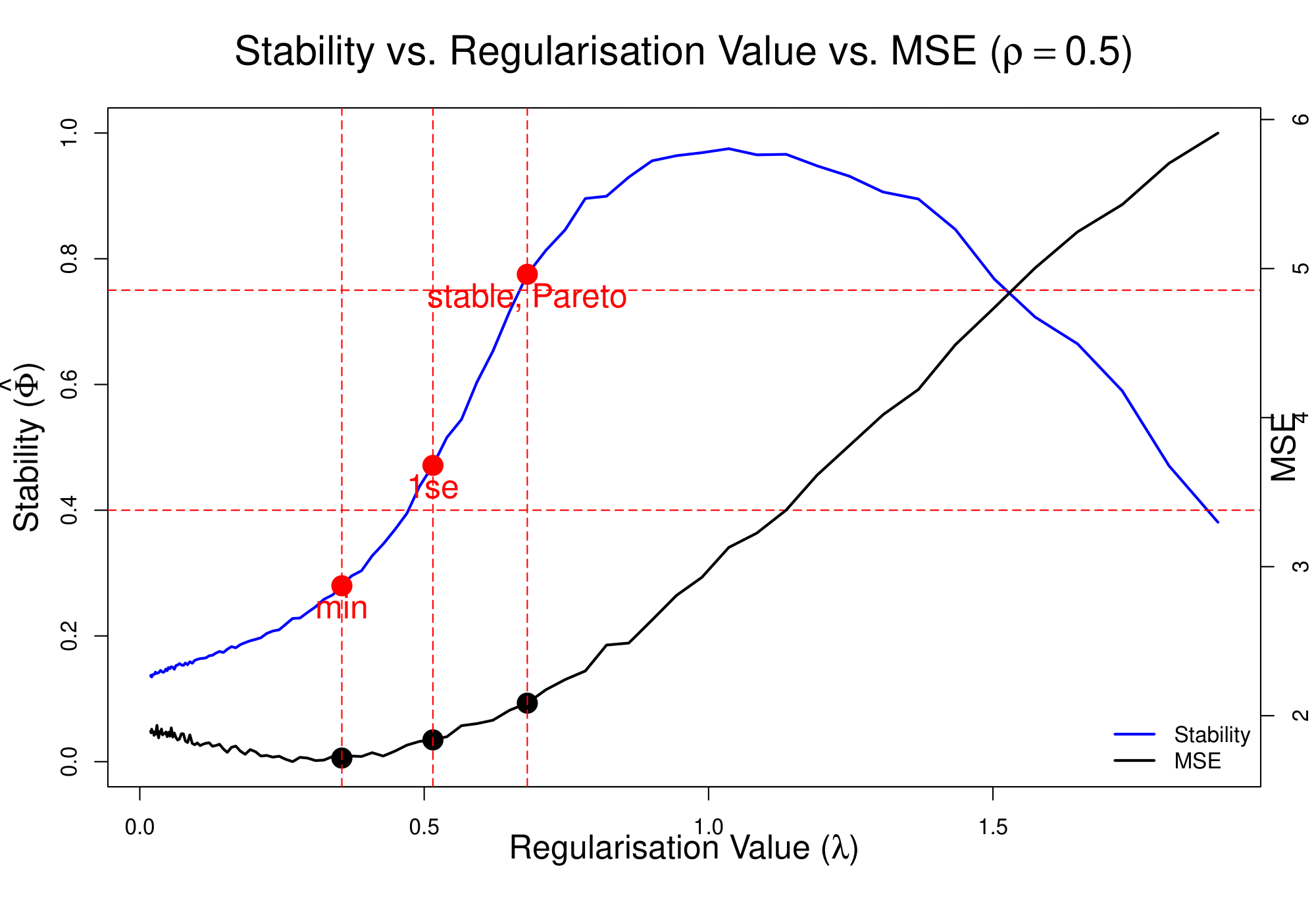}
        \caption{$\rho = 0.5$}
        \label{fig:pareto2}
    \end{subfigure}
    \vfill
    \begin{subfigure}{0.6\textwidth}
        \centering
        \includegraphics[width=\textwidth]{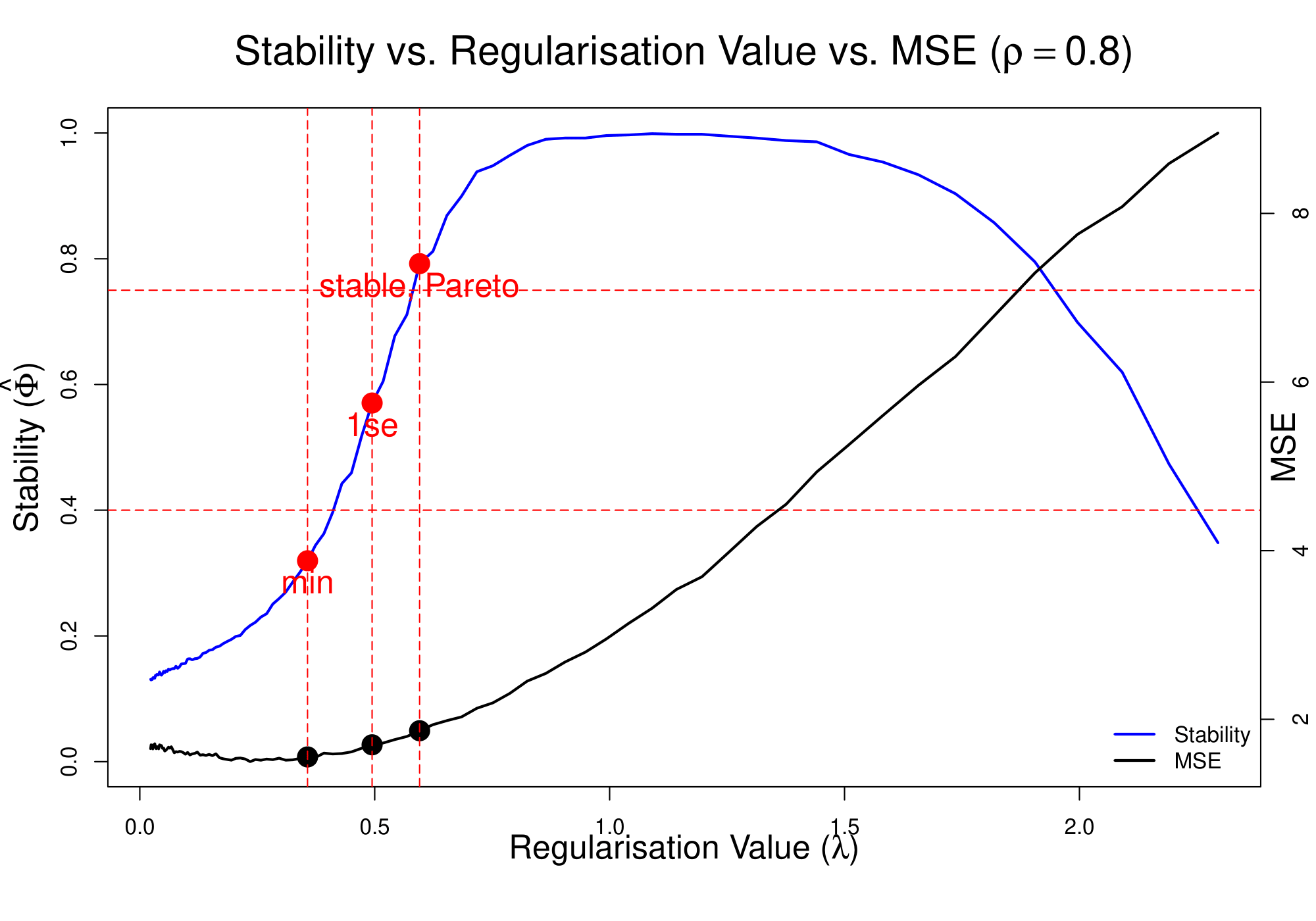} 
        \caption{$\rho = 0.8$}
        \label{fig:pareto3}
    \end{subfigure}
    \caption{$\lambda_{\text{min}}$, $\lambda_{\text{1se}}$, $\lambda_{\text{stable}}$, and $\lambda_{\text{Pareto}}$ over the regularization grid, the latter two are identical when $\rho$ is 0.5 or 0.8}
    \label{fig:three_pareto}
\end{figure}

The two assumptions in Corollary \ref{corollary1} are both justifiable. Increasing the regularization value from zero is expected to allow the model to achieve an optimal balance of sparsity by shrinking irrelevant variables, thereby enhancing stability. The regularization value $\lambda_{\text{stable}}$ is intended to signify the point where the model attains high stability with minimal loss of accuracy. Consequently, we anticipate that the stability curve exhibits a non-decreasing trend prior to reaching $\lambda_{\text{stable}}$. 

Regarding the second assumption, it is expected that increasing regularization helps the model eliminate irrelevant variables, improving its ability to predict the response variable. The regularization values $\lambda_{\text{min}}$ and $\lambda_{\text{1se}}$ are designed to capture this behavior, marking the points where regularization is most effective in terms of predictive ability. Beyond these points, we anticipate that the loss curve will be non-decreasing as a consequence of model over-shrinkage. Since $\lambda_{\text{stable}}$ is intended to trade a small amount of accuracy for greater stability, we expect $\lambda_{\text{stable}}$ to occur after $\lambda_{\text{1se}}$, that is, $\lambda_{\text{stable}} > \lambda_{\text{1se}} > \lambda_{\text{min}}$, which implies that the loss curve is expected to be non-decreasing after $\lambda_{\text{stable}}$. 

Thus, according to Corollary \ref{corollary1}, $\lambda_{\text{stable}}$ represents a stability-accuracy solution for the problem of regularization tuning, ensuring high stability while minimizing loss in prediction accuracy.

Although the methodology is presented within the context of the Lasso, its extension to other regularization-based variable selection techniques remains a natural and conceptually supported direction, as suggested by \citet{10.5555/2567709.2567772}.

\section*{Competing Interests}
The authors declare that they have no conflict of interest.

\section*{Author Contributions Statement}
 
Mahdi Nouraie was responsible for drafting the manuscript, the development of the research methodology and for writing the computer code used throughout. Samuel Muller provided critical feedback on the content of the manuscript, refining the clarity and scope of the manuscript and the computer code.  

\section*{Data Availability}
The riboflavin dataset is accessible via the \texttt{hdi} package in R \citep{hdi-package}. The rat microarray data can be obtained from the National Center for Biotechnology Information (NCBI) website at \url{www.ncbi.nlm.nih.gov}, under accession number GSE5680.

The source code used for the paper is accessible through the following GitHub repository: \url{https://github.com/MahdiNouraie/Stable-Stability-Selection}. Furthermore, the \texttt{stabplot} R package, which facilitates the use of the methodology introduced in this paper, is available through \url{https://github.com/MahdiNouraie/stabplot}.

\section*{Acknowledgments}
Mahdi Nouraie was supported by the Macquarie University Research Excellence Scholarship (20213605). Samuel Muller was supported by the Australian Research Council Discovery Project Grant (DP230101908). We gratefully acknowledge Dr. Connor Smith for his guidance and support as co-supervisor of Mahdi Nouraie’s PhD research.

\bibliographystyle{plainnat}
\bibliography{citation} 

\begin{thebibliography}{26}
\providecommand{\natexlab}[1]{#1}
\providecommand{\url}[1]{\texttt{#1}}
\expandafter\ifx\csname urlstyle\endcsname\relax
  \providecommand{\doi}[1]{doi: #1}\else
  \providecommand{\doi}{doi: \begingroup \urlstyle{rm}\Url}\fi

\bibitem[Aftab et~al.(2025)Aftab, Ahmad, Adeel, Bhatti, and and]{Aftab03052025}
Muhammad Aftab, Tanvir Ahmad, Shahid Adeel, Sajjad~Haider Bhatti, and Muhammad~Irfan and.
\newblock Hyper-parameter tuning through innovative designing to avoid over-fitting in machine learning modelling: a case study of small data sets.
\newblock \emph{Journal of Statistical Computation and Simulation}, 95\penalty0 (7):\penalty0 1595--1609, 2025.

\bibitem[Bodinier et~al.(2023)Bodinier, Filippi, N{\o}st, Chiquet, and Chadeau-Hyam]{bodinier2023automated}
Barbara Bodinier, Sarah Filippi, Therese~Haugdahl N{\o}st, Julien Chiquet, and Marc Chadeau-Hyam.
\newblock Automated calibration for stability selection in penalised regression and graphical models.
\newblock \emph{Journal of the Royal Statistical Society Series C: Applied Statistics}, 72\penalty0 (5):\penalty0 1375--1393, 2023.

\bibitem[B{\"u}hlmann et~al.(2014)B{\"u}hlmann, Kalisch, and Meier]{buhlmann2014high}
Peter B{\"u}hlmann, Markus Kalisch, and Lukas Meier.
\newblock High-dimensional statistics with a view toward applications in biology.
\newblock \emph{Annual Review of Statistics and Its Application}, 1\penalty0 (1):\penalty0 255--278, 2014.

\bibitem[Chiang et~al.(2006)Chiang, Beck, Yen, Tayeh, Scheetz, Swiderski, Nishimura, Braun, Kim, Huang, et~al.]{chiang2006homozygosity}
Annie~P Chiang, John~S Beck, Hsan-Jan Yen, Marwan~K Tayeh, Todd~E Scheetz, Ruth~E Swiderski, Darryl~Y Nishimura, Terry~A Braun, Kwang-Youn~A Kim, Jian Huang, et~al.
\newblock {Homozygosity mapping with SNP arrays identifies TRIM32, an E3 ubiquitin ligase, as a Bardet--Biedl syndrome gene (BBS11)}.
\newblock \emph{Proceedings of the National Academy of Sciences}, 103\penalty0 (16):\penalty0 6287--6292, 2006.

\bibitem[Dezeure et~al.(2015)Dezeure, B\"uhlmann, Meier, and Meinshausen]{hdi-package}
Ruben Dezeure, Peter B\"uhlmann, Lukas Meier, and Nicolai Meinshausen.
\newblock High-dimensional inference: Confidence intervals, p-values and {R}-software {hdi}.
\newblock \emph{Statistical Science}, 30\penalty0 (4):\penalty0 533--558, 2015.

\bibitem[Fleiss et~al.(2004)Fleiss, Levin, and Paik]{fliess2004measurment}
Joseph~L Fleiss, Bruce Levin, and Myunghee~Cho Paik.
\newblock \emph{The {M}easurement of {I}nterrater {A}greement}.
\newblock John Wiley \& Sons, Inc., 2004.

\bibitem[Friedman et~al.(2010)Friedman, Hastie, and Tibshirani]{friedman2010regularization}
Jerome Friedman, Trevor Hastie, and Rob Tibshirani.
\newblock Regularization paths for generalized linear models via coordinate descent.
\newblock \emph{Journal of Statistical Software}, 33\penalty0 (1):\penalty0 1, 2010.

\bibitem[Gwet(2008)]{gwet2008variance}
Kilem~Li Gwet.
\newblock Variance estimation of nominal-scale inter-rater reliability with random selection of raters.
\newblock \emph{Psychometrika}, 73\penalty0 (3):\penalty0 407--430, 2008.

\bibitem[Hastie et~al.(2009)Hastie, Tibshirani, and Friedman]{hastie2009elementss}
Trevor Hastie, Robert Tibshirani, and Jerome~H Friedman.
\newblock \emph{The {E}lements of {S}tatistical {L}earning: {D}ata {M}ining, {I}nference, and {P}rediction}, volume~1.
\newblock Springer, 2009.

\bibitem[Huang et~al.(2008)Huang, Ma, and Zhang]{huang2008adaptive}
Jian Huang, Shuangge Ma, and Cun-Hui Zhang.
\newblock Adaptive {L}asso for sparse high-dimensional regression models.
\newblock \emph{Statistica Sinica}, 18\penalty0 (4):\penalty0 1603--1618, 2008.

\bibitem[Kalousis et~al.(2007)Kalousis, Prados, and Hilario]{kalousis2007stability}
Alexandros Kalousis, Julien Prados, and Melanie Hilario.
\newblock Stability of feature selection algorithms: a study on high-dimensional spaces.
\newblock \emph{Knowledge and Information Systems}, 12:\penalty0 95--116, 2007.

\bibitem[Kuncheva(2007)]{kuncheva2007stability}
Ludmila~I Kuncheva.
\newblock A stability index for feature selection.
\newblock In \emph{Artificial Intelligence and Applications}, pages 390--395, 2007.

\bibitem[Lee and Chen(2024)]{lee2024determining}
Sze~Ming Lee and Yunxiao Chen.
\newblock Determining number of factors under stability considerations.
\newblock \emph{arXiv preprint arXiv:2409.07617}, 2024.

\bibitem[Liu et~al.(2010)Liu, Roeder, and Wasserman]{NIPS2010_301ad0e3}
Han Liu, Kathryn Roeder, and Larry Wasserman.
\newblock Stability approach to regularization selection (stars) for high dimensional graphical models.
\newblock In \emph{Advances in Neural Information Processing Systems}, volume~23. Curran Associates, Inc., 2010.

\bibitem[Meinshausen and B{\"u}hlmann(2010)]{meinshausen2010stability}
Nicolai Meinshausen and Peter B{\"u}hlmann.
\newblock Stability selection.
\newblock \emph{Journal of the Royal Statistical Society Series B: Statistical Methodology}, 72\penalty0 (4):\penalty0 417--473, 2010.

\bibitem[Nogueira et~al.(2018)Nogueira, Sechidis, and Brown]{nogueira2018stability}
Sarah Nogueira, Konstantinos Sechidis, and Gavin Brown.
\newblock On the stability of feature selection algorithms.
\newblock \emph{Journal of Machine Learning Research}, 18\penalty0 (174):\penalty0 1--54, 2018.

\bibitem[Pareto(1896)]{pareto1896cours}
Vilfredo Pareto.
\newblock \emph{Cours D'{\'E}conomie Politique}.
\newblock Rouge, Lausanne, 1896.

\bibitem[Reavie(2021)]{reavie2021pareto}
Glen Reavie.
\newblock Pareto front hyperparameter selection for small metabolomics 2x2 crossover designs.
\newblock Master's thesis, University of Guelph, 2021.

\bibitem[Scheetz et~al.(2006)Scheetz, Kim, Swiderski, Philp, Braun, Knudtson, Dorrance, DiBona, Huang, Casavant, et~al.]{scheetz2006regulation}
Todd~E Scheetz, Kwang-Youn~A Kim, Ruth~E Swiderski, Alisdair~R Philp, Terry~A Braun, Kevin~L Knudtson, Anne~M Dorrance, Gerald~F DiBona, Jian Huang, Thomas~L Casavant, et~al.
\newblock Regulation of gene expression in the mammalian eye and its relevance to eye disease.
\newblock \emph{Proceedings of the National Academy of Sciences}, 103\penalty0 (39):\penalty0 14429--14434, 2006.

\bibitem[Sen et~al.(2021)Sen, Mandal, and Chakraborty]{sen2021critical}
Rikta Sen, Ashis~Kumar Mandal, and Basabi Chakraborty.
\newblock A critical study on stability measures of feature selection with a novel extension of lustgarten index.
\newblock \emph{Machine Learning and Knowledge Extraction}, 3\penalty0 (4):\penalty0 771--787, 2021.

\bibitem[Shah and Samworth(2013)]{shah2013variable}
Rajen~D Shah and Richard~J Samworth.
\newblock Variable selection with error control: another look at stability selection.
\newblock \emph{Journal of the Royal Statistical Society Series B: Statistical Methodology}, 75\penalty0 (1):\penalty0 55--80, 2013.

\bibitem[Sun et~al.(2013)Sun, Wang, and Fang]{10.5555/2567709.2567772}
Wei Sun, Junhui Wang, and Yixin Fang.
\newblock Consistent selection of tuning parameters via variable selection stability.
\newblock \emph{Journal of Machine Learning Research}, 14\penalty0 (1):\penalty0 3419–3440, 2013.

\bibitem[Tibshirani(1996)]{tibshirani1996regression}
Robert Tibshirani.
\newblock Regression shrinkage and selection via the {L}asso.
\newblock \emph{Journal of the Royal Statistical Society Series B: Statistical Methodology}, 58\penalty0 (1):\penalty0 267--288, 1996.

\bibitem[Wen et~al.(2023)Wen, Wang, and and]{Wen03072023}
Canhong Wen, Qin Wang, and Yuan~Jiang and.
\newblock Stability approach to regularization selection for reduced-rank regression.
\newblock \emph{Journal of Computational and Graphical Statistics}, 32\penalty0 (3):\penalty0 974--984, 2023.

\bibitem[Yang et~al.(2020)Yang, Zhao, Wilding, Kluczynski, and and]{Yang03042020}
Yang Yang, Jiwei Zhao, Gregory Wilding, Melissa Kluczynski, and Leslie~Bisson and.
\newblock Stability enhanced variable selection for a semiparametric model with flexible missingness mechanism and its application to the champ study.
\newblock \emph{Journal of Applied Statistics}, 47\penalty0 (5):\penalty0 827--843, 2020.

\bibitem[Zhou et~al.(2013)Zhou, Sun, Liu, Hu, and Ye]{zhou2013patient}
Jiayu Zhou, Jimeng Sun, Yashu Liu, Jianying Hu, and Jieping Ye.
\newblock Patient risk prediction model via top-k stability selection.
\newblock In \emph{Proceedings of the 2013 SIAM International Conference on Data Mining}, pages 55--63. SIAM, 2013.

\end{thebibliography}
\end{document}